\newtheorem{theorem}{Theorem}[section]
\newtheorem{lemma}[theorem]{Lemma}
\newtheorem{corollary}[theorem]{Corollary}
\newtheorem{claim}[theorem]{Claim}
\newtheorem{definition}{Definition}%
\newtheorem{remark}{Remark}
\newcommand{\eps}{\varepsilon}
\newcommand{\poly}{\operatorname{poly}}
\newcommand{\neigh}{\mathcal{N}}
\newcommand{\Hamming}{\Delta}
\newcommand{\Weight}{\omega}
\newcommand{\col}{\mathsf{Collision}}
\newcommand{\silence}{\mathsf{Silence}}
\newcommand{\single}{\mathsf{SingleSender}}
\newenvironment{theorem-repeat}[1]{\begin{trivlist}
\item[\hspace{\labelsep}{\bf\noindent Theorem \ref{#1} }]\em }%
{\end{trivlist}}
\begin{document}

\title{\textbf{Noisy Beeping Networks}}

\author{%
Yagel Ashkenazi\thanks{\texttt{123yagel@gmail.com}. Supported in part by the Israel Science Foundation (ISF) through grant No.\@ 1078/17.} 
\and 
Ran Gelles\thanks{\texttt{ran.gelles@biu.ac.il}. \textbf{Corresponding Author}. Supported in part by the Israel Science Foundation (ISF) through grant No.\@ 1078/17  and in part by the U.S.-Israel Binational Science Foundation (BSF) through grant No.\@ 2020277.} 
\and 
Amir Leshem\thanks{ \texttt{amir.leshem@biu.ac.il}. Supported in part by the Israel Science Foundation (ISF) through grant No.\@ 1644/18.}
}

\date{Faculty of Engineering, Bar-Ilan University}

\maketitle

\begin{abstract}
Beeping networks were suggested by Afek et al.~[Science~331(6014), 2011] as a way to abstract the communication of exceedingly simple computational devices such as the brain cells of flies or ants in colonies.
Devices in these networks have limited communication capabilities, in that they can only emit a pulse of energy or sense the channel for a pulse of energy. 
Works on beeping networks assume perfect transmission and reception of messages, and the inter-node interference is the sole impairment. In reality, no receiver is perfect and is typically impacted by noise.  

We introduce \emph{noisy} beeping networks, where the observed communication is noisy with some fixed probability of error. Such noisy networks have implications for ultra-lightweight power limited sensor networks as well as biological systems. We show how to compute tasks in a noise-resilient manner assuming a noisy beeping network of arbitrary structure. 
In particular, we transform any  
algorithm over  a noiseless beeping network (of size~$n$) into a noise-resilient version while incurring a multiplicative overhead of essentially~$O(\log n)$ in its round complexity, with high probability. 
We show that our coding is optimal for some (short) tasks, such as the node-coloring of cliques. 
Interestingly, in the case of coloring, our technique achieves the same complexity as the standard beeping model, while being noise resilient.

We further show how to simulate a large family of algorithms designed for distributed networks in the CONGEST($B$) model over a noisy beeping network. 
The simulation succeeds with high probability and incurs an asymptotic multiplicative overhead of $O(B\cdot \Delta \cdot \min(n,\Delta^2))$ in the round complexity, where $\Delta$ is the maximum degree of the network. The overhead is tight for certain graphs, e.g., a clique. Surprisingly, our technique implies a \emph{constant overhead} coding for constant-degree networks. 

\end{abstract}

\paragraph{Keywords:}
Fault-tolerant Distributed Computing; 
Beeping Networks;
Computation with Noise

\section{Introduction}

In their \emph{Science} paper, 
Afek et al.~\cite{AABHBB11} 
used the beeping network model (also defined and formalized in~\cite{CK10})
to analyze 
distributed computation performed by multiple extremely simple agents 
such as cells in the fly's nervous system or ants in colonies~\cite{fonio2016}.
Studying this simplified communication model allows for better understanding of distributed computation problems where the computation is limited by communication. 
In this model, nodes are assumed to be able either to send a pulse of energy (`beep') or to sense an incoming transmission (`listen'), but they cannot do both at the same time. Furthermore, a node set to listen cannot tell how many of its neighboring nodes are beeping simultaneously, since all the node can sense is the existence of a pulse of energy, a single `beep'.

In this work we focus on a novel variant of the beeping model we call the \emph{noisy} beeping model.
This setting is parametrized by a noise parameter $\eps \in (0,1/2)$,  denoted  by~$BL_\eps$.
In this setting we assume that nodes are imperfect devices: sometimes they err in sensing the signal and deciding whether or not a beep was transmitted. Specifically, if a node is set to listen and none of its neighbors beep, the node still has a probability of $\eps$ to (falsely) hear a beep, independently of other nodes.
Additionally, if one or more of its neighbors beep, the node will hear nothing with probability~$\eps$, again, independently of the other nodes. 
Beeping nodes behave the same as they do in the standard $BL$~model.

The presence of noise invalidates almost all existing algorithms. 
To take a simple example consider the case    
where the goal of the nodes is to form an MIS of the network's graph
(see Section~\ref{sec:mis} for more details).
One way to find an MIS is the following: nodes randomly pick a number (say of size $\Theta(\log n)$ bits, where $n$ is the number of nodes in the network) and check who has the largest number in each neighborhood; this party joins the MIS.
To achieve this goal, each node can ``beep'' its chosen number: 
the node looks at the binary representation of its number and
for each of the next $\Theta(\log n)$~rounds, the node beeps if the corresponding bit of the binary representation of the number is ‘1’ and listens if it is ‘0’.
A node that is set to listen and hears a beep learns that it has a neighbor with a higher number. 
If the node does not learn about a neighbor with a higher number, it joins the MIS and beeps to signal this event; this node and its neighbors (who hear this beep) then finalize their output and terminate running the algorithm. Repeating the above enough times will yield an MIS.
It is easy to verify that a noisy beep can falsify the computation by 
causing two neighboring nodes to believe they have the highest number in their
neighborhood, or by causing a node and its entire neighborhood to quit without any of them ever joining the MIS.

Before describing our results, we discuss the noise model in more detail. 
The typical noise model in digital communication research papers
since Shannon's fundamental work
is additive \emph{receiver noise}, 
where each receiver experiences independent noise~\cite{gallager88,goldsmith2005wireless}.
Similar to the foundations laid by Shannon and the noisy model used by Gallager~\cite{gallager88}, our noise model assumes independent \emph{receiver noise} as well. 
That is, the receiving device features  faulty behavior, such that sometimes it fails to detect energy whereas at others it detects energy even though no other node has beeped. 
This behavior is typically caused  by amplifier (and other electronic) noise which is a fundamental limit of all physical receiving devices.
This aligns well with the fact that beeping networks commonly consist of extremely cheap, ultra-lightweight devices, where the receiver is merely a simple carrier sensing device which is prone to false-alarms and misdetection errors. 
Receiver noise also makes sense in biological systems such as ant colonies, where wind or other environmental disruptions may displace pheromones so that a nearby ant does not ``hear'' them, while a different ant located farther away does.

Another possible abstraction of the wireless channel noise assumes that each link between every transmitter and any receiver  adds independent noise~\cite{EKS20}. 
We argue that 
this kind of noise makes little sense in wireless networks and is too strong for arbitrary beeping networks. 
To better understand this, consider the case of a star network where a center node is connected to $n$ other nodes. If we allow \emph{channel} noise, then the center will constantly hear beeps: even when all the $n$~nodes in its neighborhood are set to listen, the center will hear a beep with probability $1-(1-\epsilon)^n$. %
Moreover, this probability tends 
towards~1
as more devices are present, \emph{even if they are all silent}. This makes little sense in the case of wireless networks since the noise probability should not depend on the number of devices, unless we assume devices are faulty and falsely transmit energy with probability~$\eps$ (this is also known as \emph{sender noise}).

\subsection{Results and Techniques}

\subsubsection{Noise-resilient beeping simulation}
We show how to perform any task that takes $R$ rounds assuming a noiseless beeping network
on a \emph{noisy} beeping network, with a multiplicative overhead of  $O(\log n +\log R)$ 
in the round complexity ($n$ is the number of nodes) with high probability, i.e., with a polynomially small failure probability. 
We further argue that this overhead
is tight for general simulations in the regime of $R=n^{O(1)}$, as there is a task that has a lower bound of $\Omega(\log n)$ in its overhead, compared to the noiseless task.

\begin{theorem}[main, informal]
\label{thm:main}
Given any network of $n$ nodes in an arbitrary topology and given 
any protocol $\pi$ of length~$R$ rounds in the beeping model (with or without collision detection),
$\pi$ can be simulated in a \textbf{noisy} beeping network (without collision detection)
in $R \cdot O(\log n+\log R)$ rounds with a probability of at least $1-2^{-\Omega(\log n+\log R)}$.
\end{theorem}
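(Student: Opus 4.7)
My plan is to simulate each round of~$\pi$ by a block of $k = \Theta(\log n + \log R)$ consecutive noisy rounds, using amplification and a majority vote. Consider first the plain $BL$ model (no collision detection). In a noiseless execution of~$\pi$, each node in each round either beeps or listens, and each listener's output is a single bit --- whether at least one neighbour beeped. In the simulation, every node simply repeats its intended action across all $k$ sub-rounds of the block (beepers keep beeping, listeners keep listening). Each listener obtains $k$ independent noisy observations of the same underlying bit, so a Chernoff bound shows that the majority matches the true bit with probability at least $1 - 2^{-\Omega(k)}$, where the implicit constant depends only on~$\eps$. Taking $k = C(\log n + \log R)$ for a sufficiently large constant $C$, and union-bounding over the at most $nR$ listener decisions made throughout the simulation, every decision is correct with probability at least $1 - 2^{-\Omega(\log n + \log R)}$. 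Conditioned on this good event, each node's simulated view matches its noiseless view in every round, so by induction on the round index the simulation reproduces $\pi$ faithfully and each node produces the same output as in the noiseless execution.

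The main obstacle is simulating the collision-detection variants $B_{cd}L$, $BL_{cd}$, and $B_{cd}L_{cd}$ on top of noisy $BL$, since a single beep/silence bit cannot directly carry the richer feedback those models provide. To handle $B_{cd}$ (beeper-side detection), I would dedicate a part of each block to a phase in which the original beepers pseudo-randomly alternate between beeping and listening, so that any beeper with a beeping neighbour detects it in at least one sub-round with probability $1 - 2^{-\Omega(k)}$. To handle $L_{cd}$ (listener-side detection), I would run an inner randomized procedure in which beepers independently beep with a tuned constant probability $p$ per sub-round; the empirical frequency of beeps observed by a listener then concentrates around a value that unambiguously distinguishes $0$, exactly $1$, and $\geq 2$ beeping neighbours, after correcting for the noise rate~$\eps$. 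Both inner procedures still fit in a block of length $O(\log n + \log R)$, and their failure analyses are absorbed into the same union bound.

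Altogether, the total number of noisy rounds is $R \cdot O(\log n + \log R)$, and the probability that any listener decision or any simulated collision-detection output is wrong anywhere in the simulation is at most $nR \cdot 2^{-\Omega(k)} = 2^{-\Omega(\log n + \log R)}$, as required. The subtlest step is calibrating $p$ in the $L_{cd}$ simulation so that the three regimes remain well-separated under the distortion introduced by the noise; this is a routine concentration argument once $p$ is fixed, and is the only place in the proof where one must reason about anything beyond a black-box Chernoff bound on independent Bernoulli samples.
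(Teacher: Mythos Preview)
Your approach is correct and genuinely different from the paper's. The paper handles all four beeping variants with a single unified primitive: each active node picks a uniformly random codeword from a \emph{balanced} binary code of length $n_c=\Theta(\log n+\log R)$ with constant relative distance~$\delta$, and beeps that codeword; every node then thresholds its total beep count against $n_c/4$ and $(1/2+\delta/4)n_c$ to decide among silence, single sender, and collision. The key step (Claim~\ref{clm:codewordOR}) is that the bitwise OR of two distinct balanced codewords has weight at least $(1+\delta)n_c/2$, which creates a constant gap between the one-sender and multi-sender regimes. Your $L_{cd}$ scheme produces the same separation by i.i.d.\ randomness rather than a code: with each beeper firing independently with probability~$p$, the per-sub-round probability that some neighbour beeps is $1-(1-p)^m$, and the values $0$, $p$, and $\ge 2p-p^2$ are already constant-separated. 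Your route is more elementary (no coding machinery, no need to argue that active parties chose distinct codewords); the paper's route is more unified (one procedure covers $BL$, $B_{cd}L$, $BL_{cd}$, $B_{cd}L_{cd}$ at once, with active parties automatically learning the outcome from their own codeword plus what they hear in their silent slots, and the constant-weight property makes the single-sender analysis especially clean since the expected heard count is exactly $n_c/2$ regardless of~$\eps$).

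One caveat: your $B_{cd}$ phase as written (``detects it in at least one sub-round'') does not survive noise. A beeper with \emph{no} active neighbour will also hear at least one spurious beep among its $\Theta(k)$ listening sub-rounds with probability $1-(1-\eps)^{\Theta(k)}\to 1$, so ``heard a beep at least once'' cannot be the decision rule. You need the same frequency threshold you already use for $L_{cd}$: among the sub-rounds in which the beeper listens, compare the fraction of heard beeps to a cutoff between $\eps$ and $\eps+p(1-2\eps)$. This is an easy fix given the rest of your argument, but the sentence as stated describes a failing procedure.
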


In the above, simulating a protocol $\pi$ (designed for model~$A$) in model~$B$ means that when running the simulation in model~$B$, the parties end up with a valid transcript of some instance of running $\pi$ in model $A$. Our theorem applies to both deterministic and random protocols. See further details in Section~\ref{sec:prelim}.

The simulation result of Theorem~\ref{thm:main} leads to exciting outcomes. For instance, it gives fast algorithms for computing coloring, MIS, and leader election over noisy beeping networks (Section~\ref{sec:applicaitons}).
In particular, we show how to perform coloring in $O(\Delta \log n + \log^2n)$ rounds, where $\Delta$~is the maximum degree of the network. 
We obtain this result by emulating 
the coloring protocol of Casteigts et al.~\cite{CMRZ19} via the method of Theorem~\ref{thm:main}. 
Surprisingly, Chlebus et al.~\cite{CMT17} prove a lower bound of $\Omega(n \log n)$ rounds for coloring\footnote{Note that the lower bound in~\cite{CMT17} regards coloring a clique with $n$ colors, while the scheme in~\cite{CMRZ19} uses $O(\Delta+\log n)$ colors, which is an easier task. However, given an $O(\Delta+\log n)$-coloring of the clique, one can perform a standard color reduction in~$O(\Delta+\log n)=O(n)$ rounds which yields an $n$-coloring of the clique.} a clique of size~$n$ \emph{in the $BL$ model}. 
This implies that our noise-resilient simulation is tight in this case 
since the $BL_\eps$ model is weaker than the~$BL$ model.

\medskip

\subsubsection{Noise-resilient collision detection}
Several relaxed versions of the beeping model appeared in the literature. 
These variants differ in the way they regard the event of a \emph{collision}---where more than a single node is beeping at a given time. Collision detection capabilities can be attributed to listening nodes, beeping nodes, or both.
Beeping-with-collision-detection means that if a node is set to beep, it can tell whether or not one or more of its neighbors also beeped on that same round. Listening-with-collision-detection means that a node set to listen can distinguish the following events: no neighbors beep; a single neighbor beeps; multiple neighbors beep. 
These models are denoted $B_{cd}L$, $BL_{cd}$, and $B_{cd}L_{cd}$, respectively, where the standard beeping model without any collision detection capabilities is denoted~$BL$.
In the last ten years, beeping networks have attracted considerable interest.
Previous works (e.g., \cite{AABHBB11,AABCHK13,GH13,JSX16,BBDK18,DBB18,CD19le,CMRZ19}) have 
considered all these four variants and have developed various algorithms that solve specific tasks, such as vertex coloring, leader election, or finding a Maximal Independent Set (MIS). 
In the noiseless beeping model, some tasks have different complexities when executed over channels with and without collision detection.
In many cases, the two settings are separated by a factor of $O(\log n)$ in their achievable round complexity. 
For instance, coloring a graph with a known maximum degree~$\Delta$ takes $O(\Delta \log n)$ rounds in the no-collision-detection $BL$ model~\cite{CK10} but only  $O(\Delta+\log n)$ rounds with collision detection, i.e., in the $B_{cd}L$ setting~\cite{CMRZ19}; finding an MIS takes $O(\log n)$ rounds if collision detection is present~\cite{JSX16,CMRZ19}, and up to $O(\log^2 n)$ rounds when no collision detection is available~\cite{AABHBB11,HL16}. %

\smallskip 

Our main technique in Theorem~\ref{thm:main} 
is a noise-resilient \emph{collision detection} method which reduces the 
$BL_\eps$ model to the stronger $B_{cd}L_{cd}$ model. 
That is, by executing this method, nodes can tell with high probability whether none, one, or more than one of their neighboring nodes  has beeped.
The cost of each instance of the collision detection procedure is $O(\log n)$~rounds, and it works with high probability in the presence of noise. 
In a sense, by blowing up the round complexity by a $O(\log n)$ factor, we achieve two goals: noise resilience and collision detection. 
While each goal might be trivial on its own assuming a blowup of $O(\log n)$, we achieve both by paying the overhead once.
This allows us to simulate any protocol designed for the $B_{cd}L_{cd}$ model (or any other weaker variant) over the noisy with no-collision-detection $BL_\eps$ network with high probability, without suffering a $O(\log^2 n)$ overhead.

As a consequence, the simulation stated in
Theorem~\ref{thm:main} can be applied to protocols defined in the stronger $B_{cd}L_{cd}$ model, which, as noted above, usually feature round complexity which is smaller by a factor $\Theta(\log n)$ to begin with.
This leads to the surprising result where for many tasks we ``pay no price'' for making them noise resilient, compared to the corresponding $BL$ algorithms (as both the $BL$ and $BL_\eps$ models do not feature collision detection, this is the right comparison).
In the case of coloring, this disparity in the model leads to tight results. 
Unfortunately, for other tasks such as MIS or leader election, there is still a gap between the lower and upper bounds in the noisy model (similar gaps also exist in the noiseless beeping setting). We elaborate on this in Section~\ref{sec:applicaitons}.

Note that for short protocols, where $R= \poly(n)$, 
the above reduction
suffices to show that all rounds succeed with high probability via a union bound. 
However, for longer protocols, there will still be errors every $\poly(n)$ rounds, which we can avoid with high probability by increasing the overhead of the collision detection  to~$O(\log R)$. 
To the best of our knowledge, all the beeping protocols that have appeared so far in the literature are of polynomial length, i.e., all the relevant protocols are indeed short.

\medskip

The idea behind the collision detection mechanism is very simple and straightforward. 
Assume we want to simulate a single round of the $B_{cd}L_{cd}$ model. Consider one specific neighborhood and call all the parties that want to beep in this round \emph{active}; denote the parties that listen as \emph{passive}.
The parties pick a random codeword from a binary code~$C$ of length $\Theta(\log n)$ that has both \emph{good distance} and \emph{constant weight}. With high probability each one of the nodes in a given neighborhood picks a unique codeword from~$C$.
Next, each active party beeps its codeword: for the next $i=1$ to $\Theta(\log n)$ rounds, 
it beeps if and only if the $i$-th bit of its codeword is~1. See Figure~\ref{fig:CD} for a demonstration.

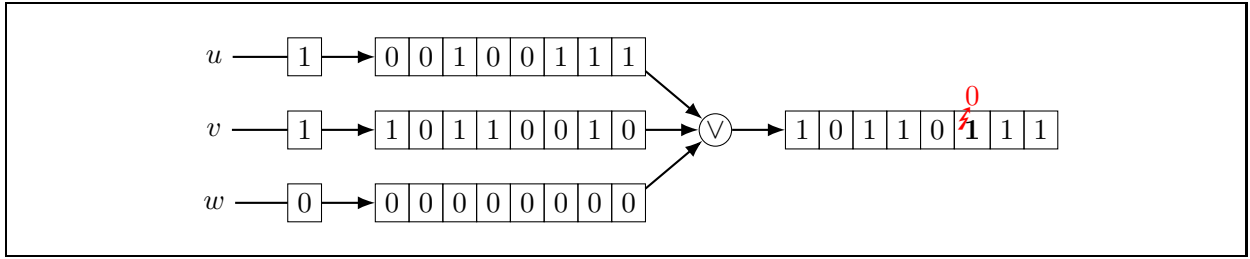
\begin{figure}[ht]
\smallskip
\begin{framed}
\centering
\begin{tikzpicture}[
node distance=0pt,
start chain =A going right,
start chain =B going right,
start chain =C going right,
start chain = L going right,
    X/.style = {rectangle, draw,%
                minimum width=2ex, minimum height=3ex,
                outer sep=0pt}
]

\matrix[row sep=0mm,column sep=7mm] {
\node (u) {$u$}; & \node[X] (u1){1}; &  
	\coordinate (u3);
	\foreach \i [count=\xi] in {0,0,1,0,0,1,1,1 }
	     \node[on chain=A,X] (A\xi) {\i};
& & \\
\node (v) {$ v$}; & \node[X] (v1) {1}; & 
	  \foreach \i  [count=\xi] in {1,0,1,1,0,0,1,0 }
	    \node[on chain=B,X] (B\xi) {\i};    
&  \node(v4) [circle,draw,inner sep=1pt] {$\vee$} ;
&  	\foreach \i [count=\xi] in {1,0,1,1,0,\textbf{1},1,1 }
	     \node[on chain=L,X] (L\xi) {\i};
	     \node at (L6) [red,below right=5pt,rotate=180] {\large\Lightning};
	     \node at (L6) [red,above=2mm] {$0$};

\\
\node (w) {$w$}; & \node[X] (w1) {0}; & 
	  \foreach \i [count=\xi] in {0,0,0,0,0,0,0,0 }
	    \node[on chain=C,X] (C\xi) {\i};	
	    \node at (C6) [white,above=2mm] {\phantom{$0$}}; %
& & \\
  };

 \draw  [thick,-Latex] (u) -- (u1) -- (A1);
  \draw  [thick,-Latex] (v) -- (v1) -- (B1);
   \draw  [thick,-Latex] (w) -- (w1) -- (C1);
   
 \draw [thick,-Latex] (A8) -- (v4);
  \draw [thick,-Latex] (B8) -- (v4);
  \draw [thick,-Latex] (C8) -- (v4);  

\draw [thick,-Latex] (v4) -- (L1);

\end{tikzpicture}
\end{framed}
\vspace{-0.75em}
\caption{\footnotesize A demonstration of the collision detection scenario. Nodes $u$ and~$v$ are active, and $w$ is passive. Each of the active parties picks a random codeword from a code of weight~$4$, and beeps it. The channel superimposes the beeps. The weight of the received superimposed  transmission indicates whether there were no active parties, one active party, or whether a collision has happened. Noise may flip some of the bits a certain party hears (in this example, the 6th bit was flipped by the noise for some receiver). }
\label{fig:CD}
\end{figure}

The three cases (no transmission, single transmitter or multiple transmitters) can be distinguished by the constant weight property of the code. 
If no party is active, then no party beeps (the ``all zero'' word is transmitted). If a single party beeps, then a single codeword is beeped, and its expected weight matches the weight of the code. 
In the case where multiple parties beep, the expected weight of the transmission is much higher (this stems from the distance property of the code). 
Each party, active or passive, counts the number of beeps and decides between the three cases accordingly.
The noise causes each party to hear a noisy version of the transmissions; however, the expected number of beeps is determined solely by the three cases above. 
Then, our analysis shows that 
the probability that the noise corrupts enough beeps to yield a wrong outcome is polynomially small in~$n$, given the right choice of code parameters.

In terms of lower bounds, 
it is not too difficult to see that collision detection over noisy networks requires~$\Omega(\log n)$ rounds to succeed with high probability.
The noise has a polynomial probability to corrupt $O(\log n)$ slots, which means it can invalidate any protocol of that length. 
Therefore, our collision detection protocol is optimal.
\begin{theorem}[collision detection, informal]
The task of Collision Detection over noisy beeping networks ($BL_\eps$) with a polynomially small failure probability
requires $\Theta(\log n)$ rounds.
\end{theorem}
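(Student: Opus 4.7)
The plan is to establish the two bounds separately. For the upper bound $T = O(\log n)$, I would formalize the collision-detection procedure sketched in the body of the paper. Fix a binary code $C \subseteq \{0,1\}^{\ell}$ with length $\ell = \Theta(\log n)$, cardinality $|C| \ge n^c$ for a sufficiently large constant $c$, constant weight $\Weight = \Theta(\ell)$, and minimum Hamming distance $\Hamming \ge \Weight/2$; standard random or algebraic constructions give such codes. Each active party in a given neighbourhood draws a uniformly random codeword and beeps it over $\ell$ rounds, and every party (active or passive) counts the rounds in which it hears a beep. A birthday bound shows that, except with probability $1/\poly(n)$, the active parties in any neighbourhood pick pairwise-distinct codewords. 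Conditioning on this event, the noiseless received Hamming weight equals $0$, $\Weight$, or at least $\Weight + \Hamming/2$ in the $\silence$, $\single$, and $\col$ cases respectively, so the three means are separated by $\Omega(\ell)$. A Chernoff bound then shows the noisy count concentrates within $o(\ell)$ of its expectation except with probability $2^{-\Omega(\ell)} = 1/\poly(n)$, so a threshold decoder distinguishes the three cases with polynomially small failure probability.

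For the matching $T = \Omega(\log n)$ lower bound, the plan is a two-point information-theoretic argument. Consider a star network with center $v$ and $n-1$ leaves, and two scenarios differing only in whether a distinguished leaf $u$ is passive (scenario A, total silence) or active (scenario B, with only $u$ potentially beeping). Under any $T$-round protocol that errs with probability at most $\delta = n^{-c}$ in each scenario, Le~Cam's two-point lemma forces $\operatorname{TV}(Y_A, Y_B) \ge 1 - 2\delta$, where $Y_A, Y_B \in \{0,1\}^T$ are $v$'s received sequences. Under~A the true carrier at $v$ is all-zero, so $Y_A \sim \operatorname{Bern}(\eps)^T$; under~B the carrier equals $u$'s (possibly adaptive, randomized) beep pattern $X$ of some weight $w \le T$, and conditional on $X$, $Y_B$ has $\operatorname{Bern}(1-\eps)$ marginals on the $w$ positions where $X=1$ and $\operatorname{Bern}(\eps)$ elsewhere. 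A direct computation identifies $1-\operatorname{TV}(\operatorname{Bern}(\eps)^w,\operatorname{Bern}(1-\eps)^w)$ with $2\Pr[\operatorname{Binomial}(w,\eps) \ge w/2]$, which by the Cram\'er lower bound on binomial tails is at least $\exp(-w C_\eps)/\poly(w)$, with $C_\eps = -\log\bigl(2\sqrt{\eps(1-\eps)}\bigr) > 0$ the Chernoff information of $\operatorname{Bern}(\eps)$ versus $\operatorname{Bern}(1-\eps)$. Combining with Le~Cam and $w \le T$ yields $\exp(-T C_\eps)/\poly(T) \le 2n^{-c}$, forcing $T = \Omega(\log n)$.

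The main obstacle is the lower-bound step: turning the informal ``noise can knock out $O(\log n)$ slots'' intuition into a formal bound that survives arbitrary adaptive randomized protocols. The cleanest route, as above, passes through Le~Cam combined with the Chernoff information, so that the protocol's freedom collapses into the single scalar $w \le T$, after which the argument reduces to comparing two binomials. The upper bound is essentially routine once the code parameters and Chernoff tails are in place.
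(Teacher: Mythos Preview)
Your upper bound is essentially the paper's: a constant-weight code of length $\Theta(\log n)$, threshold-decode the beep count, and use Chernoff to control the tails. The paper formalizes this as Algorithm~\ref{alg:collision} and Theorem~\ref{thm:collisionDetection}, with the balanced code obtained by concatenating a good binary code with $\{0\mapsto 01,\,1\mapsto 10\}$; your version is the same idea.

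For the lower bound, the paper takes a far more elementary route than your Le~Cam/Chernoff-information argument. Fix any party $u$, its input, and its randomness; then $u$'s output is determined by the sequence of bits it hears in its (at most $T$) listen rounds. Since $u$ must be able to produce at least two distinct outputs, some reachable hearing pattern yields the wrong answer. Because the noise at $u$ is independent across rounds and flips each heard bit with probability $\eps\in(0,1/2)$, \emph{every} specific hearing pattern occurs with probability at least $\min(\eps,1-\eps)^{T}=\eps^{T}$, regardless of what the other parties do. Hence the error probability is at least $\eps^{T}$, and demanding $\eps^{T}<n^{-c}$ forces $T=\Omega(\log n)$. This one-line argument sidesteps adaptivity entirely.

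Your two-point argument, as written, has a real gap: you assert that in scenario~A the carrier at $v$ is all-zero (so $Y_A\sim\mathrm{Bern}(\eps)^T$) and that in scenario~B it equals $u$'s beep pattern. Neither is justified. In the model, \emph{passive} parties are allowed to beep; the protocol dictates actions as a function of input, randomness, and history, not of the label ``active/passive'' alone. In the star, the $n-2$ other (passive) leaves may beep in both scenarios, and their behavior can even differ between~A and~B because $v$'s beeps (which the leaves hear) may differ once $v$ starts hearing different things. So $Y_A$ need not be a product of $\mathrm{Bern}(\eps)$'s, and $Y_B\mid X$ need not have the clean product form you use; the reduction to a single scalar $w\le T$ then breaks down. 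You can likely repair the approach (e.g., couple all randomness and noise except at $v$, then argue round-by-round that each listened bit is $\eps$-close in TV under the two scenarios conditioned on the past), but at that point you are essentially reproving the paper's ``any pattern has probability $\ge \eps^{T}$'' observation through heavier machinery.
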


\subsubsection{Noise-resilient simulation of CONGEST protocols}

Our previous result considers the message-passing model, i.e., the CONGEST model with messages of size~$B$ bits.
We show that any fully-utilized\footnote{Fully-utilized protocols are ones in which in every round every party sends a message to each one of its neighbors.} protocol~$\pi$ in the CONGEST($B$) model can be simulated over the noisy beeping network~$BL_\eps$ with high probability, with a multiplicative overhead of $O(B\cdot \Delta \cdot \min(n,\Delta^2) )$ in the round complexity (as the length of the protocol, $|\pi|$, tends to infinity). 
Note that for networks with a constant degree our simulation imposes a constant overhead when~$B=O(1)$. 

The simulation operates in two steps, similar to the approach in~\cite{BBDK18,ABLP89}.
In the first step the parties perform 2-hop coloring. After this step,
each node is assigned a ``color'' (a number) so that no two nodes of distance at most two share the same color.
In the second step the coloring is used to obtain Time-Division Multiple Access (TDMA). 
That is, assume rounds are also colored in a sequential cyclic manner and let nodes only speak in rounds associated with their color.
Since we use a 2-hop coloring, it is guaranteed that only a single node beeps in every neighborhood, so no collisions occur. When a node gains exclusive access to the channel, it transmits messages to all its neighbors at once. 
This TDMA blows up the number of rounds by the number of colors times the maximum degree of the network; note that in a 2-hop coloring, the number of colors is bounded by~$\min(\Delta^2,n)$. 

On top of the TDMA we employ a coding technique for interactive coding over noisy networks developed by Rajagopalan and Schulman~\cite{RS94,ABEGH19}. 
This coding allows the parties to replace~$\pi$ 
with a \emph{resilient} protocol that computes the same functionality as~$\pi$
even if some of its messages arrived corrupted. 
This coding requires a bound on the per-message noise probability to be at most~$O(\Delta^{-1})$, which naively implies an additional blowup of $O(\log \Delta)$ in order to reduce the noise to that level. 
We avoid this blowup by taking advantage of the broadcast nature of the beeping model. 
Each node can concatenate the messages directed to \emph{all} its neighbors into a single message (of length~$O(B\Delta)$) that will be encoded against errors via a standard error-correcting code with a constant distance. 
The node beeps this single message, and all its neighbors will hear and decode it simultaneously.
This method reduces the per-message noise level to $2^{-\Omega(\Delta)}$ with only a \emph{constant} overhead.

The following theorem summarizes this result.
\begin{theorem}[Simulating message-passing protocols, informal]
\label{thm:MPsim-inf}
Given a 2-hop coloring with $c$~colors, 
any fully-utilized protocol~$\pi$ 
in the CONGEST(B) model can be simulated 
with high probability in the noisy beeping model $BL_\eps$ 
in $O(c^2\log n) + \max(|\pi|,\log n/\Delta)\cdot O(B\cdot c \cdot \Delta)$~rounds.
\end{theorem}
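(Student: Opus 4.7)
The plan is to combine three ingredients: the 2-hop coloring for scheduling, a broadcast-friendly error-correcting code for per-round reliability, and the Rajagopalan--Schulman interactive coding scheme to boost per-message reliability into an end-to-end simulation.

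First, I would use the given 2-hop coloring with $c$ colors to set up a TDMA schedule. Each simulated CONGEST($B$) round is emulated by $c$ consecutive blocks of beeping rounds, one per color class. Within block~$i$, only nodes colored $i$ are permitted to transmit; by definition of a 2-hop coloring, every node has at most one color-$i$ neighbor, so each listener receives beeps from a single source and collisions are structurally impossible (even before considering noise). This accounts for the factor~$c$ in the overhead, and reduces the task to making a \emph{single} sender's transmission reliably reach all its neighbors.

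Second, I would design each block so that a speaking node delivers its CONGEST($B$) messages to all $\Delta$ neighbors at once. Since $\pi$ is fully utilized, the sender has up to $\Delta$ outgoing messages of $B$ bits each; I would concatenate them into a single payload of length $O(B\Delta)$, encode that payload with an asymptotically good error-correcting code (constant rate, constant relative distance $\delta$) of length $m=O(B\Delta)$, and beep the $m$ bits one per round. Each neighbor reads its own noisy copy, and by a Chernoff bound the number of flipped bits concentrates sharply around $\eps m$, which is well below $\delta m /2$ for any constant $\eps<1/2$. Hence each neighbor decodes the full concatenated payload correctly except with probability $2^{-\Omega(m)}=2^{-\Omega(B\Delta)}$. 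This accounts for the factor $B\Delta$ and, crucially, drives the per-transmission failure probability well below the $O(1/\Delta)$ threshold needed next.

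Third, I would apply the Rajagopalan--Schulman tree-code simulation (as used in \cite{RS94,ABEGH16}) on top of the noise-reduced channel produced above. That scheme simulates any message-passing protocol over a network whose per-message failure probability is $O(1/\Delta)$ at only a constant multiplicative round overhead, succeeding with high probability. Composing the three factors yields the claimed $O(c \cdot B \cdot \Delta)$ overhead.

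The main obstacle I expect is reconciling the broadcast-concatenation step with the RS analysis: the classical RS guarantee is phrased for \emph{edge-independent} message corruptions, whereas here the $\Delta$ simulated edge-transmissions outgoing from a single sender in a TDMA slot all share one encoded payload, so their decoding failures on different neighbors are correlated. I would argue that the invariant actually required by the RS scheme is that each simulated round fails with probability $O(1/\Delta)$ from the sender's perspective, not true independence across edges: a union bound over a node's $\Delta$ neighbors combined with the decoded-failure bound $2^{-\Omega(B\Delta)}$ still leaves the effective per-edge failure exponentially small in $\Delta$, which suffices for the convergence argument of the tree-code simulation to go through unchanged.
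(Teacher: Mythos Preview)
Your proposal is correct and uses the same three ingredients as the paper (TDMA via the 2-hop coloring, a concatenated-payload ECC broadcast so that one codeword serves all neighbors, and Rajagopalan--Schulman interactive coding). The only notable difference is the order of composition: the paper first serializes $\pi$ to a CONGEST($1$) protocol~$\pi_1$, applies RS to $\pi_1$ to obtain a constant-alphabet protocol~$\Pi$ of length $O(B\,|\pi|)$, and \emph{then} runs TDMA with an $O(\Delta)$-length codeword per slot, so the factor~$B$ sits in $|\Pi|$ rather than in the codeword length; this has the advantage that RS is invoked only on bit protocols, exactly as Theorem~\ref{thm:RS} is stated. Your flagged obstacle is also a non-issue in the $BL_\eps$ receiver-noise model: since each neighbor decodes from its own independently-noised copy of the broadcast, the decoding failures at different receivers are genuinely independent, and the RS analysis applies without any additional argument.
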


In~\cite{BBDK18}, Beauquier et al.\@ showed how to simulate a CONGEST(B) protocols over $BL$ networks with $O(B\cdot c^2)$~multiplicative overhead. 
Hence our simulation (Theorem~\ref{thm:MPsim-inf}) improves the result of~\cite{BBDK18} for some networks (e.g., when $\Delta \ll n$), in addition to being noise-resilient.

Finally, we show that our simulation is tight in the case of a clique. We show a  task (actually, a family of tasks, parametrized by~$k$) that can be computed in $k$ rounds in the CONGEST(1) model, but takes $\Omega(kn^2)$ rounds in the $BL$ (or $BL_\eps$) model over a clique with $n$~nodes. 
Note that our simulation's multiplicative overhead over a clique, as given by Theorem~\ref{thm:MPsim-inf},
 is exactly~$\Theta(n^2)$ since the length of $\pi$ tends to infinity ($k\to\infty$). 

\subsection{Related Work}
The beeping model was introduced by Cornejo and Kuhn~\cite{CK10}, and was
followed by numerous works targeting specific  tasks in the four variants of the beeping model (with and without collision detection),
for both single-hop (clique) and multi-hop networks. 
These include algorithms for  finding an MIS~\cite{AABHBB11,AABCHK13,JSX16,HL16,BBDK18,CMRZ19},
(node) coloring of the network~\cite{CK10,JSX16,CMT17,CMRZ19}, leader-election~\cite{GH13,FSW14,DBB18,CD19le}, self-stabilization and synchronization~\cite{GM15},
and
broadcasting a message~\cite{CD19bc}. 
Simulating beeping algorithms with collision-detection ($B_{cd}L_{cd}$, $B_{cd}L$, and $BL_{cd}$) 
over beeping networks without collision-detection $BL$ was shown by~\cite{CMRZ19counting,CMRZ19}; the simulation incurs an $O(\log n)$ factor.

Hounkanli, Miller, and Pelc~\cite{HMP20} considered the case of a \emph{single-hop, noisy} beeping channel, where the noise can only make beeps disappear. They provide algorithms for global synchronization and consensus in this setting.
Efremeko, Kol, and Saxena~\cite{EKS20}
considered lower bounds on \emph{single-hop, noisy} beeping channels. In contrast to our work, 
in their noise model  the 
(single) channel can convert a beep to silence and vice-versa, but this affects \emph{all} the listening parties.
They showed that a multiplicative overhead of $\Omega(\log n)$ is necessary for coding
beeping protocols against the above noise in the single-hop setting.

\medskip

\emph{Radio Networks}~\cite{CK85} are closely related to beeping networks. In this setting, wireless devices
can communicate with their neighbors by sending messages of some fixed size; however, if a collision occurs (more than a single sender on a given round), no message is delivered. While this model seems very similar to the beeping model, there are significant differences that stem from the fact that collisions in the beeping model superimpose while in the radio network model they destructively interfere.

For example, consider the case of broadcasting a message. This task can be done in $O(D+M)$~rounds in the BL model (where $D$ is the diameter of the network, and $M$ the length of the message), by employing ``beep waves''~\cite{GH13,CD19bc}.
On the other hand, in the radio networks model there are graphs with diameter~$D=2$ that require $\Omega(\log^2n)$ to broadcast a message~\cite{ABLP91}.
Kushilevitz and Mansour~\cite{KM98jur} 
proved a lower bound of $\Omega(D\log \frac{n}{D})$ for graphs with diameter~$D$.
Similar to the case of beeping,
there are variants with and without collision detection. Bar-Yehuda, Goldreich, and Itai~\cite{BGI91} showed how to simulate a single-hop radio network with collision detection over a multi-hop network without collision detection with $O(( D+\log\frac{n}{\delta})\log\Delta)$ overhead ($1-\delta$  was the desired success probability of their scheme). This simulation led to an almost optimal broadcast with $O(D\log n + \log^2n)$. Later, Czumaj and Rytter~\cite{CR06}, as well as Kowalski and Pelc~\cite{KP05}, obtained optimal broadcast algorithms (assuming no spontaneous transmissions) with $O(D\log\frac{n}{D}+\log^2 n)$ rounds.

The case of \emph{noisy} radio networks  was introduced by Censor-Hillel et al.~\cite{CHHZ17}. They provided a noise-resilient broadcast algorithm with $O(\frac{\log n}{1-\eps}(D+\log \frac{n}{\delta}))$ that succeeded with probability $1-\delta$ over a radio network where each transmission is be corrupted with probability~$\eps$. Further, Censor-Hillel et al.~\cite{CHHZ19} showed how to simulate radio network protocols over noisy radio networks with high probability, with an overhead $\poly(\log  \Delta, \log\log n)$ for non-adaptive protocols, and $O(\Delta \log^2 \Delta)$ for any protocol. Lower bounds of $\Omega(\log n)$ on the overhead required for interactive coding for noisy (erasure) radio networks were presented by Efremenko, Kol and Saxena~\cite{EKS19}.

\medskip 
Another closely related field is \emph{interactive coding}~\cite{gelles17}, where computations are performed  over noisy networks (of various kinds) in the message-passing model,  and the aim is to perform the computation in a noise-resilient manner with small overhead. 
Many works have considered the case of computation over noisy networks, and has given coding schemes assuming random noise~\cite{RS94,GMS14,ABEGH19}, coding schemes for worst-case noise~\cite{JKL15,HS16,CGH19,GKR19,ADHS19,EKS18,MG21},
and lower bounds on the overhead assuming random noisy channels~\cite{BEGH18,GK19}.

\section{Preliminaries}\label{sec:prelim}

\paragraph{Standard Notations.}
For a positive integer $n\in \mathbb{N}$ we denote by~$[n]$ the set $\{1,\ldots,n\}$. We use standard Bachmann-Landau notations, e.g.\@ $O(\log n)$ and $\Omega(\log n)$, to denote the asymptotic behavior of a function as $n$ (or a different parameter, usually explicitly written) tends to infinity. %
All logarithms are taken to base~2.

\paragraph{The Beeping Communication Model.}
Throughout this work we assume a network described by an undirected graph $G=(V,E)$ with $n=|V|$ nodes and $m=|E|$ edges. Each node $u\in V$ is a party that participates in the computation. Edges in~$E$ represent pairs of neighbors that can hear each other. 
For a given node~$v$, the set
$\neigh_v = \{ u \mid (u,v) \in E\}$ is called the neighborhood of~$v$. We set $\neigh_v^+ = \{v\}\cup \neigh_v$ to be the neighborhood of~$v$ including itself.
The \emph{maximum degree} of the network, $\Delta = \max_{v\in V} |\neigh_v|$, is the size of the largest neighborhood in the network. The diameter~$D$ of the network is the maximal length of the shortest path between any two nodes. For any $n \in \mathbb{N}$, we denote by~$K_n$ the clique of size~$n$ (also known as a \emph{single hop} network of $n$ parties; otherwise, this is a \emph{multi hop} network).

We assume that the size of the network, $n$, is known to all nodes; however, the topology of the network is unknown. Moreover, the nodes are assumed to be identical, that is, they run the same algorithm and have no distinguishing identifier. However, we assume that each node has its own stream of independent random bits, hence, the actions of the nodes might differ due to their difference in randomness.

Communication over the network is performed in synchronous \emph{slots} 
where for each slot each party can
either \emph{beep} or \emph{listen}.
If some party~$v$ beeps at a given slot, 
then any of its neighbors $u \in\neigh_v$ set to \emph{listen}, hears a beep. 
Parties that are set to listen in a given slot either hear a \emph{beep},
which implies that at least one of their neighbors beeped in that slot, 
or they hear \emph{silence}, which means that none of their neighbors beeped. 

We consider four possible models that differ in the collision-detection capabilities of the parties.
In the $BL$ model (no collision detection for beeping or listening nodes), in any slot where $v$ beeps, it cannot tell if any of its neighbors beeped or not. Further, in any slot where $v$~listens and hears a beep, it cannot tell how many of its neighbors beeped at that slot.

We can relax the $BL$ model and give collision-detection capabilities to either beeping nodes (denoted with $B_{cd}$) or to listening nodes (denoted with $L_{cd}$).
In the $B_{cd}L$ and the $B_{cd}L_{cd}$ models, parties that beep at a given slot can distinguish between the event that none of their neighbors beeped at that slot and the event where one or more of their neighbors beeped.
In the $BL_{cd}$ and $B_{cd}L_{cd}$ models, a party that listens and hears a beep can distinguish between the event where a single neighbor beeped and the event where multiple neighbors beeped.

\paragraph{Noisy Beeping Model.}
In the noisy beeping model, $BL_\eps$, the communication works similarly to the $BL$ model defined in the paragraph above, except that random noise might alter the event of hearing a beep to a silence, and vice-versa.
In particular, 
for any party~$v$ set to \emph{listen} at a given slot, 
if the anticipated outcome of the node (in the noiseless $BL$ model) in this specific slot
is $out \in \{beep, silence\}$, then with probability~$\eps$ its outcome in the noisy model will be the other possible outcome. 
We assume that the noise crossover probability 
is in the range of $\eps\in(0,1/2)$ and  that the noise
is independent between different nodes and across different time-slots of the same node.
We further assume $\eps$ is known to all the parties.

\paragraph{Protocols.}
Nodes run \emph{a protocol}---a distributed algorithm that aims to solve some distributed task (e.g., to color the network, elect a leader, or broadcast a message).
The protocol dictates to each node when to beep or listen and what output to give, as a function of the current time slot, all previous communications, and the node's input and randomness. 
The \emph{transcript} of a given party is the sequence of sent and received messages (or beeps) observed by that party during the execution of the protocol. The protocol's transcript is the aggregation of all the parties' transcripts.

We assume that a protocol $\pi$ can run on any graph~$G$ with $n$ nodes, and denote an instance of such execution by~$\pi(G)$; we usually omit~$G$ when it is clear from context. (We sometimes restrict $\pi$ to a certain family of graphs, e.g., cliques~$K_n$.)
The \emph{length} or  \emph{round complexity} of a protocol~$\pi$ on~$G$, denoted~$|\pi(G)|$, is the (maximal) number of rounds (slots) it takes until all parties have terminated.
We naturally extend this notion to denote the asymptotic behavior of~$\pi$ over a family of graphs $\{G_i = (V_i,E_i)\}$, with $|V_i|\to\infty$. 

The beeping protocols discussed in this work are randomized Monte-Carlo protocols, 
where the computation succeeds \emph{with high probability}. That is, with  a probability of at least $1-n^{-\Omega(1)}$, on any sequence of graphs $G_n$ as $n\to \infty$. 
To facilitate the discussion, we assume that the length of~$\pi$ (on a given~$G$) can only depend on~$G$; in particular, it is independent of the parties' randomness. 
Our results easily extend to the case where the length is a random variable. 
Note that, in contrast to the above notion, some works (e.g.,~\cite{CMT17,CD19le}) define $|\pi|$ as the expected computation time, which significantly affects the lower and upper bounds they obtain.
Finally, bear in mind that any protocol designed to succeed with high probability in $BL_\eps$ for a specific~$\eps<1/2$, will also succeed with high probability in $BL_{\eps'}$ for  any $0\le \eps' < \eps$. Further, recall that by repeating each transmission $m$ times and taking their majority, 
one can reduce $BL_\eps$ to $BL_{\eps'}$. 
When $\eps,\eps'$ are both constants, then $m=O(1)$, leading to the same asymptotical complexity.

\paragraph{Simulating Protocols.}
We say that a protocol~$\Pi$ (in some given communication and noise model)
\emph{simulates}~$\pi$ (possibly defined over a different model), 
if after the completion of~$\Pi$, each node outputs the transcript it would have seen when running~$\pi$ assuming noiseless channels.   
To be more exact, let $\pi(G,rand)$ be the instance of $\pi$ on $G$ where the parties get the randomness~$rand$. Then a simulation $\Pi(G,rand, rand')$ of~$\pi$ gives the parties the randomness $rand$ and $rand'$ and aims to have the parties output $\pi(G,rand)$ with high probability \emph{over the randomness~$rand'$}.

In Section~\ref{sec:mp} we discuss a slightly different type of beeping algorithm, whose purpose is to simulate a CONGEST protocol over a beeping network. 
In fact, we require these simulation protocols to succeed with high probability with two complementary notions: 
(1) when $\pi$ is defined over any~$G$, and then $\Pi$ is required to succeed with probability~$1-n^{-\Omega(1)}$ as $n\to\infty$, where $n$ is the size of~$G$, and
(2) when $\pi$ is an infinite family of increasing-length protocols that all are defined over \emph{the same network~$G$}. Then, $\Pi$ needs to succeed with probability~$1-|\pi|^{-\Omega(1)}$ as $|\pi|\to\infty$. Our simulations in Section~\ref{sec:mp} satisfy both these notions; the simulations in the rest of this paper adhere to notion~(1).

The \emph{overhead}  (in rounds) of simulating $\pi$ describes the increase in the round complexity of~$\Pi$ as a function of~$|\pi|$. When $|\Pi| = f |\pi| + g$ for some functions $f,g$, the term $f$ is the multiplicative overhead and $g$ is the additive overhead. We focus on reducing the multiplicative overhead. We write the overhead using asymptotic notations $O()$, $\Omega()$ etc., where the asymptotic behavior holds either when $n\to\infty$ (e.g., when the algorithm $\pi$ is fixed and $G$ changes), or when $|\pi|\to\infty$ (e.g., when $G$ is fixed and the protocol $\pi$ to be simulated changes).

\paragraph{Error Correcting Codes.}
A \emph{code} is a mapping $C: \Sigma^{rn} \to \Sigma^n$, where $\Sigma$ is some alphabet. We call 
$r\in (0,1)$ the {\em rate} of the code and $n\in\mathbb{N}$ the {\em block length} of the code. 
The elements in the image of $C$ are called {\em codewords}. 
With a slight abuse of notation, we use~$C$ to also denote the set of codewords.

 A binary code is one with~$\Sigma=\{0,1\}$. 
We say that $C$ has {\em relative distance} $\delta\in[0,1]$, if for
every pair of distinct vectors $x,y \in C$ it holds that
$\Hamming(x,y)\ge\delta n$, where $\Hamming(\cdot,\cdot)$ is the Hamming distance function. The weight of a codeword~$x\in C$, denoted $\Weight(x)$, is the number of non-zero indices in~$x$, i.e., its Hamming distance from the all-zero word.

It is well known that binary codes with constant rate and relative distance exist.
Such codes can be obtained, for example, by concatenating Reed-Solomon codes~\cite{RS60} with binary linear Gilbert-Varshamov codes~\cite{G52,V57}.
\begin{lemma}[{\cite[Theorem~1]{justesen72}}]
\label{lem:binary-codes}
For any given rate $\rho<1/2$ and for any $m>0$, there exists a binary code $C_m: \{0,1\}^{\rho_m n}\to\{0,1\}^n$ with $n=2m(2^m-1)$ such that the code $C_m$ has rate $\rho_m \ge \rho$ and relative distance $\delta_m > 
(1-2\rho)H^{-1}(\tfrac12)$.
\end{lemma}
In the above, $H(x)=x\log(1/x)+(1-x)\log(1/(1-x))$ is the binary entropy function. Note as well that the above codes can be encoded and decoded in polynomial time in~$n$~\cite{justesen72}.

\paragraph{Chernoff's Inequality.}
We will use the following Chernoff bound~\cite{chernoff1952}.
\begin{lemma}[Corollary~4.6 in~\cite{MUbook}]
\label{lem:chernoff}
Let $X_1,\ldots, X_n$ be independent $\{0,1\}$-random variables such that $\Pr[X_i=1]=p_i$. Let $X=\sum_{i=1}^n X_i$ and $\mu = E[X]$. For $0<\delta<1$, $\Pr[|X-\mu|\ge \delta \mu]\le 2e^{-\mu\delta^2/3}$.
\end{lemma}

\section{Collision Detection in the Noisy Beeping Model}

The \emph{collision-detection} task is defined as follows. Each party is either \emph{active} or \emph{passive}. At the end of the scheme each party needs to output whether (i) all the nodes in its neighborhood (including itself) are passive, (ii) there exists exactly a single active node, or (iii) more than one node is active. 

In this section we develop a collision-detection algorithm for beeping networks, 
which succeeds with high probability even in the presence of noise, i.e., in the $BL_\eps$ model.
The basic idea is very simple: we use a (binary) constant-weight code~$C$ where every codeword has the same number of~$1$'s. Parties beep the binary representation of codewords from~$C$ (this means they beep for each 1 of the codeword). 
Since the code is of constant weight, if exactly a single party is active, then the number of times it beeps is exactly the code's weight. 
If the number of beeps heard is substantially larger than that, we can infer that more than a single party was active. 
If on the contrary the number of beeps is substantially smaller,
then we infer that no party was active. Note that noise may add or remove beeps, and cause the parties to infer the wrong outcome. However, this event occurs with a  polynomially small probability due to our choice of parameters.

In particular, we use a \emph{balanced} binary code~$C$ of length~$n_c$.
A balanced binary code is one where the Hamming weight of any codeword is exactly~$n_c/2$. 
The code~$C$ has a constant relative distance $\delta$ and a constant rate~$r$. 
To be concrete, we can construct~$C$ by taking any binary code with a constant relative distance and rate (Lemma~\ref{lem:binary-codes}) and concatenate it with a balanced code of size~2, e.g., $0\to 01$ and
$1\to 10$. This concatenation makes the code balanced while preserving its distance. 
The rate decreases by a constant factor of~$2$.

In any given instance of the collision-detection protocol, every party that wants to beep is ``active''.
Any other node is ``passive''. 
Each active node uniformly at random picks a codeword~$c\in C$ and for the next $i=1,2,\ldots,n_c$ rounds it beeps if $c_i=1$, or keeps silent otherwise.   

As mentioned above, if no party is active, then no node beeps during the $n_c$~rounds. Any beep that is heard by some node in this case must be associated with noise. In particular, the number of expected beeps each node hears is~$\eps n_c$.
If there is only a single active node, each node hears $n_c/2$ beeps in expectation.
However, when two or more nodes are active, with high probability they have chosen different codewords $c_1,c_2$. 
Since $\Hamming(c_1,c_2)>\delta n_c$, at least $n_c/2+(\delta/2) n_c$~beeps are sent, which means that at  least $n_c(1/2+ \delta/2 -\eps\delta)$~beeps are heard by each node in expectation (including its own beeps).

By choosing a large enough $\delta$ with respect to the noise probability~$\eps$, 
we are able to distinguish between these three cases. 
The following technical lemma lower bounds the number of beeps in a case of a collision.
That is, if two (or more) adjacent nodes try to beep different codewords $c_1, c_2 \in C$, 
then in at least $n_c/2 + \delta n_c/2$ rounds there exists some node that beeps.

\begin{claim}\label{clm:codewordOR}
Let $C$ be a balanced binary code with length~$n_c$ and relative distance~$\delta$.
For any two distinct codewords $c_1,c_2 \in C$, the Hamming weight of $c_1 \vee c_2$ (i.e., the bit-wise OR) is at least $n_c(1+\delta)/2$.
\end{claim}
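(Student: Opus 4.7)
The plan is to translate the statement into elementary set arithmetic by identifying each codeword with its support. Concretely, for $i=1,2$ I would let $S_i \subseteq [n_c]$ be the set of coordinates where $c_i$ equals~$1$, so that $\Weight(c_i) = |S_i|$ and $\Weight(c_1 \vee c_2) = |S_1 \cup S_2|$. The balance hypothesis immediately gives $|S_1| = |S_2| = n_c/2$.

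Next I would invoke the standard identification of Hamming distance with symmetric difference: $\Hamming(c_1, c_2) = |S_1 \triangle S_2| = |S_1| + |S_2| - 2|S_1 \cap S_2| = n_c - 2|S_1 \cap S_2|$. Combined with the distance assumption $\Hamming(c_1,c_2) \ge \delta n_c$, this rearranges to the upper bound $|S_1 \cap S_2| \le n_c(1-\delta)/2$.

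Finally I would apply inclusion–exclusion, $|S_1 \cup S_2| = |S_1| + |S_2| - |S_1 \cap S_2|$, and substitute the two preceding equalities to obtain
\[
\Weight(c_1 \vee c_2) \;=\; \frac{n_c}{2} + \frac{n_c}{2} - |S_1 \cap S_2| \;\ge\; n_c - \frac{n_c(1-\delta)}{2} \;=\; \frac{n_c(1+\delta)}{2},
\]
which is precisely the claim. There is essentially no obstacle here: the entire argument is a two-line computation, and the only point worth flagging is that both hypotheses of the claim (balance and minimum distance) enter in exactly one place each — balance fixes $|S_1|,|S_2|$, and the distance bound controls $|S_1 \cap S_2|$.
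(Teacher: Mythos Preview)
Your proof is correct and is essentially the same elementary counting argument as the paper's, just organized more cleanly: the paper works with the symmetric difference $I = \mathrm{supp}(c_1 \oplus c_2)$ and argues that the OR gains at least $|I|/2$ positions beyond the weight $n_c/2$ of one codeword, whereas you go straight through inclusion--exclusion and bound $|S_1 \cap S_2|$. The two computations are equivalent line by line.
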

\begin{proof}
In a balanced code $\Weight(c_1)=\Weight(c_2)=n_c/2$. Consider  $d=c_1 \oplus c_2$. It has at least $\delta n$ non-zero elements, which we denote~$I$. Let $c\in\{c_1,c_2\}$ be the codeword that has minimal weight on the support~$I$, and denote by $c_I$ the codeword~$c$ restricted to the indices in~$I$. 
Since $c_1$ and $c_2$ differ on each index in~$I$ and $c$ is the one with minimal weight on~$I$,
we get  that, $\Weight(c_{I})\le |I|/2$. 
Thus, the weight of $c_1 \vee c_2$ is at least the weight of $c$ plus the zeros in $c_{I}$, where the other codeword is 1,
\[
\Weight(c_1\vee c_2) \ge n_c/2 + (|I|-|I|/2) \ge n_c/2 + \delta n_c/2. \qedhere
\]
\end{proof}

The noise-resilient collision-detection protocol is depicted in Algorithm~\ref{alg:collision}. 
Each node begins with an input of  ``active'' or ``passive'' that determines if it wants to beep or just wants to detect if one or more neighbors want to beep.
\begin{algorithm}[ht] %
\caption{Collision detection over a noisy beeping model, $BL_\eps$}
\label{alg:collision}
\begin{algorithmic}[1]
\small
\Statex Assume a balanced binary code 
$C$ of length~$n_c$, relative distance $\delta<1/2$ and rate~$r$ (which may depend on~$\eps$).

\Statex

\Procedure {CollisionDetection}{active/passive}
	\If {passive}
		\State \emph{listen} for $n_c$ rounds  
	\ElsIf{active}
		\State Pick a codeword $c\in C$ uniformly at random.
		\For {round $j \in \{1,\ldots, n_c\}$}
			\State If $c_j=0$, \emph{listen}.
			\State If $c_j=1$, \emph{beep}. 
		\EndFor
	\EndIf

	\State Let $\chi$ be the number of beeps sent plus heard during the above $n_c$~rounds.
	\If {$\chi<\frac{n_c}{4}$}
		\State \Return $\silence$
	\ElsIf{ $\chi < %
			\tfrac12(1+\delta/2) n_c$}  
		\State \Return $\single$ 
	\Else
		\State \Return $\col$
	\EndIf

\EndProcedure

\end{algorithmic}
\end{algorithm}

We now formally analyze the correctness of the \textsf{CollisionDetection} procedure of Algorithm~\ref{alg:collision}. Namely, we prove that except with polynomially small error probability, any node outputs whether zero, one, or more than one of its neighbors were active.
We begin by considering a single node and proving it can detect collisions within its own neighborhood, with high probability. Then, we union bound over all the nodes in the network.
\begin{theorem}\label{thm:collisionDetection}
Assume $G$ is a network with~$n$ nodes in the $BL_\eps$ model, where all nodes perform 
the $\Call{CollisionDetection}{}$ procedure. 
Further assume that $n_c= \Omega(\log n)$, $\delta>4\eps$, and $r=\Theta(1)$.
Then, for any specific node $v$, the following three claims hold, with probability 
\(
1- n^{-(1+\Omega(1))}.
\)
\begin{enumerate}
\item If the number of active parties in~$\neigh_v^+$ is two or more, then $v$ outputs~$\col$.
\item If none of the parties in~$\neigh_v^+$ is active, then $v$ outputs~$\silence$. 
\item If the number of active parties in~$\neigh_v^+$ is exactly one, then $v$ outputs~$\single$.
\end{enumerate}
\end{theorem}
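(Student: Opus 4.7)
The plan is to compute, in each of the three cases, the expectation of the count $\chi_v$ gathered by a fixed node $v$ over the $n_c$ rounds, and then use Chernoff concentration plus a union bound over the $n$ nodes of the clique. Since $n_c = \Omega(\log n)$ with a sufficiently large constant, a per-node deviation probability of $\exp(-\Omega(\eps^2 n_c))$ will combine with a $\exp(-\Omega(r n_c))$ bound on the event that two active senders pick the same codeword to yield an overall failure bound of $n^{-\Omega(1)}$, as claimed.

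For case~2, every slot is silent at every source, and $v$ merely samples a Bernoulli$(\eps)$ in each of $n_c$ listening rounds, so $\chi_v \sim \mathrm{Binomial}(n_c,\eps)$ with $\E[\chi_v] = \eps n_c$. The hypothesis $\eps < \delta/4 < 1/8$ places this expectation well below $n_c/4$, and Chernoff gives $\chi_v < n_c/4$ except with the claimed probability. For case~3, I would treat the unique sender $s$ and a passive listener $u$ separately: $s$ deterministically contributes $n_c/2$ of its own beeps and additionally hears noise in $\mathrm{Binomial}(n_c/2,\eps)$ of its silent listening slots, so $\E[\chi_s] = (1+\eps)n_c/2$, whereas $u$ hears each of the $n_c/2$ true beeping slots with probability $1-\eps$ and each of the $n_c/2$ true silent slots with probability $\eps$, giving $\E[\chi_u] = n_c/2$.

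For case~1, I would first observe that with at most $n$ active parties drawing from a code of size $2^{r n_c}$, the probability that any two pick the same codeword is at most $\binom{n}{2}\cdot 2^{-r n_c} = n^{-\Omega(1)}$ by our choice of $n_c$ and $r$. Conditioned on all active codewords being distinct, Claim~\ref{clm:codewordOR} applied to any two of them (and the monotonicity of bit-wise OR) guarantees that in at least $n_c(1+\delta)/2$ of the slots some active node beeps. For a passive listener $u$ this yields $\E[\chi_u] \geq n_c \eps + (n_c(1+\delta)/2)(1-2\eps) = n_c(1+\delta(1-2\eps))/2$, and for an active sender it is even larger, since adding its own $n_c/2$ beeps to the $\geq n_c\delta/2$ listening slots in which other senders beep contributes an additional $\eps n_c/2$ to the expectation.

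It remains to fix the threshold $\alpha$. Choosing $\alpha = \tfrac12 + \tfrac{\eps}{4} + \tfrac{\delta(1-2\eps)}{4}$ places $\alpha n_c$ squarely between the largest case-3 expectation $n_c(1+\eps)/2$ and the smallest case-1 expectation $n_c(1+\delta(1-2\eps))/2$. The hypothesis $\delta > 4\eps$ (together with $\eps < 1/8$) forces this gap to be at least $\eps n_c$, so $\alpha n_c$ lies at distance $\Omega(\eps n_c)$ from both expectations, and similarly $n_c/4 - \eps n_c = \Omega(n_c)$ separates case~2 from the cutoff. A standard Chernoff bound then places each node's $\chi_v$ on the correct side of its relevant cutoff except with probability $\exp(-\Omega(\eps^2 n_c))$, and a union bound over the $n$ nodes completes the argument. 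The only delicate point in the proof is precisely this parameter tuning: one has to verify that all three expected values stay bounded away from the two cutoffs by a constant fraction of $n_c$ (which is what forces $\delta > 4\eps$), and then pick the constant in $n_c = \Omega(\log n)$ large enough to swallow both the Chernoff bound and the codeword-collision bound.
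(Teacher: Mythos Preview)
Your proposal is correct and follows essentially the same approach as the paper: compute the expected count~$\chi$ in each of the three cases, apply Chernoff concentration, and union-bound over the $n$ nodes together with the codeword-collision event. You are in fact a bit more careful than the paper in treating the active sender's count separately (the paper glosses over the sender's $(1+\eps)n_c/2$ expectation in case~3 and simply sets $\alpha=\tfrac12+\tfrac{\delta}{4}$), but this is a refinement of the same argument rather than a different route.
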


\begin{proof}
We prove the above three claims in turn.\\
\textit{Claim 1.}
Assume without loss of generality that $p_1,p_2$ are two of the active parties guaranteed by the theorem statement.  
Both~$p_1,p_2$ choose a random codeword from~$C$ and  with probability $1-1/2^{rn_c} = 1-2^{-\Omega(\log n)}$, their chosen codewords are not identical. 
Using Claim~\ref{clm:codewordOR} we know that out of the $n_c$~rounds, in at least $n(1+\delta)/2$ rounds some node in~$\neigh_v^+$ beeps. Note that if only a single node in~$\neigh_v^+$ is active, it beeps in only~$n_c/2$ slots.
We set $\alpha = (1+\delta/2)/2$ so that $\alpha n_c$ is the average of these two quantities; 
this will be the threshold of beeps that distinguishes between these two events.

The probability that $v$ counts fewer than $\alpha n_c$ beeps when two or more nodes are active in~$\neigh_v^+$, is less than the probability to have at least $(\delta/4)n_c$ noisy slots in either $n_c/2$ slots (if $v\in\{p_1,p_2\}$ is active) or across the entire $n_c$ slots (otherwise). Since the latter event is more probable, the error probability can be bounded by
\begin{equation}\label{eq:chr_delta|eps}
2e^{-\eps n_c\left(\frac{\delta/4-\eps}{\eps}\right)^2/3 } = e^{-\Omega(\log n)} = n^{-\Omega(1)}.
\end{equation}
This follows from Chernoff's inequality, Lemma~\ref{lem:chernoff},  assuming $\eps<\delta/4$.
The above is polynomially small. 
Note that the constant in the $\Omega(\cdot)$ in the exponent
can be made arbitrarily large by increasing the length of the code~$n_c$ by a constant factor (that may depend on $\eps,\delta$).
\\

\noindent\textit{Claim 2.}
Since none of the parties in~$\neigh_v^+$ is active, none of them beeps during the protocol. 
Then, the probability that
$v$ counts more than $n_c/2$ (noise) beeps is bounded by a polynomially small probability,
\begin{equation}\label{eq:chr_12eps}
2e^{-\eps n_c \left(\frac{1/2-\eps}{\eps}\right)^2/3}
= e^{-\Omega(\log n)}.
\end{equation}
Again, we can determine the constants by tweaking~$n_c$ so that $v$ outputs~$\silence$  with probability at least~$1-n^{-(1+\Omega(1))}$.

\noindent\textit{Claim 3.}
Since there is a single active node in~$\neigh_v^+$, this node makes exactly $n_c/2$  beeps, and none of the other nodes beeps. 
We split this into two cases. The first case is when the single active node is not~$v$. Then, 
since the noise is symmetric and the code~$C$ is balanced, the expected number of beeps $v$ counts is given by 
\(
n_c/2 \cdot (1-\eps) + n_c/2 \cdot \eps = n_c/2
\).

The node~$v$ will output~$\col$   
if due to the noise it counts more than $\alpha n_c = n_c/2 + (\delta/2)n_c/2$ beeps.
Again, using Lemma~\ref{lem:chernoff} we bound this probability by
\begin{equation}\label{eq:chr_alpha12}
2e^{-\frac{n_c}{2}(2\alpha-1)^2/3}= e^{-\Omega(\log n)}.
\end{equation}

The other case is when $v$ is the only active node in~$\neigh_v^+$. Then, $v$ counts its own $n_c/2$ with probability~$1$, and outputs~$\col$ if it hears at least $(\alpha-1/2)n_c/2= (\delta/2)n_c/2$  (noisy) beeps in the remaining $n_c/2$~slots. This erroneous event has a probability of at most
\(
e^{-\Omega(\log n)}
\), similar to Eq.~\eqref{eq:chr_delta|eps}.

Finally, note that $v$~outputs $\silence$ only if the total number of beeps it counts is below~$n_c/4$. This is impossible when $v$ is active. If a different node is the active one, counting so few beeps requires the number of noisy-slots to be at least $n_c/4 > (\delta/4)n_c$. This has a probability of at most $e^{-\Omega(\log n)}$ as argued above.
\end{proof}

Theorem~\ref{thm:collisionDetection} is stated for a single node $v$ and its neighborhood. 
Via a simple union bound, the same holds for any node in any network that performs  Algorithm~\ref{alg:collision}.
\begin{corollary}\label{cor:CD}
Assume $G$ is a network with~$n$ nodes in the $BL_\eps$ model, where all nodes perform 
the $\Call{CollisionDetection}{}$ procedure, with $n_c= \Omega(\log n)$, $\delta>4\eps$, and $r=\Theta(1)$. 
The three conditions of Theorem~\ref{thm:collisionDetection} hold for all nodes $v\in G$ with probability $1-n^{-\Omega(1)}$.
\end{corollary}

Next we claim that no collision detection protocol over~$K_n$ in $BL_\eps$ succeeds with high probability in $o(\log n)$ rounds.
\begin{lemma}\label{lem:col-lb}
Performing Collision Detection over~$K_n$ with high probability in the $BL_\eps$ model takes $\Omega(\log n)$ rounds.
\end{lemma}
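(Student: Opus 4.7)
The plan is to establish the lower bound through an information-theoretic hypothesis-testing argument. First, I would fix two input scenarios over $K_n$: $S_0$, in which every party has input ``passive'', and $S_1$, in which party~$1$ is ``active'' while the remaining parties are ``passive''. Any correct protocol of length $T$ with failure probability $\delta = n^{-\Omega(1)}$ must induce party~$2$ (passive in both scenarios) to output $\silence$ in $S_0$ and $\single$ in $S_1$, each with probability at least $1 - \delta$. Since party~$2$'s input, randomness, and decision rule are identically distributed in the two scenarios, the distributions $W^{(0)}$ and $W^{(1)}$ of its received observation sequence must satisfy $d_{TV}(W^{(0)}, W^{(1)}) \ge 1 - 2\delta$.

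The main task is then to bound the KL divergence $D(W^{(0)} \,\|\, W^{(1)})$ from both sides. For the lower bound, I would apply the Bretagnolle--Huber inequality, $d_{TV}^2 \le 1 - e^{-D}$, which together with the above TVD bound yields $D \ge \log\bigl(1/(4\delta)\bigr) = \Omega(\log n)$. For the upper bound, I would use the chain rule of KL divergence: write party~$2$'s round-$t$ observation as a symbol in $\{0, 1, \bot\}$, equal to $\bot$ if party~$2$ beeps and to $B_t \oplus N_t$ if it listens, where $B_t$ is the true beep indicator and $N_t \sim \mathrm{Ber}(\eps)$ is independent of everything else. Conditioned on any prefix of past observations $o_{<t}$, party~$2$'s action is determined and identical in both scenarios; in a beeping round the conditional KL is zero, and in a listening round the conditional distribution of $o_t$ is Bernoulli with parameter $\eps + (1-2\eps)\Pr[B_t = 1 \mid o_{<t}] \in [\eps,\, 1-\eps]$. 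The KL divergence between any two such Bernoullis is bounded by $D(\eps\,\|\,1-\eps) = O_\eps(1)$, so the chain rule gives $D(W^{(0)}\,\|\,W^{(1)}) \le O_\eps(T)$. Combining the two bounds yields $T = \Omega_\eps(\log n)$, as required.

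The main obstacle I anticipate is the adaptivity of the protocol, both in party~$2$'s own beep/listen decisions and in the cascading effect of party~$1$'s different behaviour on the true signal $B_t$, which depends in a complicated way on the randomness and noise of all other parties. The chain-rule decomposition sidesteps this issue: conditioning on the prefix $o_{<t}$ fixes party~$2$'s round-$t$ action and reduces each single-round contribution to a comparison of two Bernoulli distributions with parameters in the restricted range $[\eps, 1-\eps]$, irrespective of how entangled the joint distribution of $B_t$ is with the rest of the execution. A secondary issue is that randomized protocols must be handled at the level of marginal observation distributions rather than pointwise; this is automatic because the chain rule and Bretagnolle--Huber inequality both apply directly to the marginals $W^{(0)}$ and $W^{(1)}$.
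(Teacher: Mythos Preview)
Your approach is correct in spirit and genuinely different from the paper's. The paper gives a much more elementary argument: it simply observes that for any party~$u$ there must exist two observation patterns (over its listening slots) yielding different outputs, and that the receiver's noise converts one pattern into the other with probability at least $\eps^{T}$; hence the failure probability is at least $\eps^{T}$, forcing $T=\Omega(\log n)$. Your route---two-scenario hypothesis testing, Bretagnolle--Huber for the lower bound on KL, and the chain rule for the $O_\eps(T)$ upper bound---uses heavier machinery but is more robust: it extends with no change to asymmetric or non-binary noise, and it makes explicit that the per-round information leakage about the scenario is bounded by a constant depending only on~$\eps$. The paper's argument buys brevity and an explicit constant $\log(1/\eps)$; yours buys generality.

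One technical point needs repair. You assert that ``conditioned on any prefix of past observations $o_{<t}$, party~$2$'s action is determined,'' and that the marginal observation laws $W^{(0)},W^{(1)}$ already satisfy $d_{TV}\ge 1-2\delta$. Neither holds when party~$2$ is randomized: its action at round~$t$ depends on its private randomness~$R_2$, and the posterior $\Pr[R_2\mid o_{<t}]$ differs between the two scenarios, so the conditional probability of $o_t=\bot$ need not match across scenarios (which could even make the per-round KL unbounded). The standard fix is to fold $R_2$ into the observed sequence (equivalently, work with the joint law of $(R_2,O)$): the $R_2$-marginal is identical in $S_0$ and $S_1$, so the joint KL equals $\mathbb{E}_{R_2}\bigl[D(O^{(0)}\!\mid\! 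R_2\;\|\;O^{(1)}\!\mid\! R_2)\bigr]$, and with $R_2$ fixed your chain-rule bound goes through verbatim. The TV lower bound then applies to the joint (the output is a deterministic function of $(R_2,O)$), and Bretagnolle--Huber finishes as you wrote. Alternatively, an averaging argument over $R_2$ yields a deterministic party~$2$ with success probability at least $1-2\delta$ in both scenarios, after which your argument applies directly.
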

\begin{proof}
Let $t$ be the length of the collision detection protocol and consider a specific party~$u$ and the rounds in which $u$ listens. Note that the pattern $u$ hears in these rounds determines its output, and that there always exist two different patterns that cause $u$ to output different outcomes, whether $u$ is active or passive.

Recall that for any given slot, the noise ``flips'' what $u$ hears with probability~$\eps$. It follows that the noise can cause~$u$ to output the wrong outcome with a probability of at least~$\eps^t$. Requiring $u$ to succeed with high probability implies $\eps^t < n^{-c}$ for some constant~$c$, hence, $t=\Omega(\log n)$.
\end{proof}
An immediate corollary of the above lemma is that detecting collisions on general (worst case) graphs in~$BL_\eps$ takes $\Omega(\log n)$ rounds to succeed with high probability.
Note that a similar lower bound of $\Omega(\log n)$ rounds for collision detection also holds in the~$BL$ model,
e.g., for $n/2$ pairs of nodes~\cite{AABCHK13}, or the wheel graph~\cite{CMRZ19}. These proofs also carry over to the $BL_\eps$~model. 

\smallskip

Corollary~\ref{cor:CD} and Lemma~\ref{lem:col-lb} lead to the following.
\begin{corollary}\label{cor:collsion}
Solving the Collision Detection task on~$n$ nodes with high probability in the noisy beeping $BL_\eps$ model has a round complexity of~$\Theta(\log n)$.
\end{corollary}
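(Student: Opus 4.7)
The plan is straightforward: the corollary is a direct consequence of combining the upper bound from Theorem~\ref{thm:collisionDetection} with the matching lower bound from Lemma~\ref{lem:col-lb}, so the proof amounts to checking that the asymptotic bounds line up.

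For the upper bound, I would instantiate Algorithm~\ref{alg:collision} with a balanced binary code of block length $n_c = c\log n$ for a sufficiently large constant $c = c(\delta,\eps)$, where $\delta > 4\eps$ is a relative distance guaranteed by Lemma~\ref{lem:binary-codes} (composed with the length-2 balancing map $0\mapsto 01$, $1\mapsto 10$). Theorem~\ref{thm:collisionDetection} then yields a protocol whose round complexity is $n_c = O(\log n)$, and whose failure probability is $n^{-\Omega(1)}$ per neighbourhood; since the corollary concerns a single neighbourhood (or equivalently $K_n$), taking $c$ large enough makes this failure probability polynomially small in~$n$, which is the definition of ``with high probability'' used in the paper.

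For the lower bound, I would simply invoke Lemma~\ref{lem:col-lb}, which already establishes that any collision-detection protocol on~$K_n$ in the $BL_\eps$ model succeeding with high probability must run for $\Omega(\log n)$ rounds. Putting the two bounds together gives round complexity $\Theta(\log n)$, which is exactly the statement of Corollary~\ref{cor:collsion}.

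Since both ingredients are already proved in the excerpt, there is no real obstacle: the only thing to verify carefully is the choice of constants in the upper bound (namely, that picking $n_c$ as a sufficiently large constant multiple of $\log n$ simultaneously satisfies $n_c = \Omega(\log n)$ as required by Theorem~\ref{thm:collisionDetection} and makes the hidden constant in the failure probability exponent exceed any prescribed polynomial threshold). This is a trivial constant-tuning step, and no new ideas are needed beyond what is already developed in the section.
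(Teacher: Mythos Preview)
Your proposal is correct and mirrors the paper's own reasoning exactly: the corollary is stated immediately after Theorem~\ref{thm:collisionDetection} and Lemma~\ref{lem:col-lb} as a direct combination of the two, with no additional argument given. Your extra remarks about tuning the constant in~$n_c$ are fine but go beyond what the paper itself spells out.
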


\section{Noise-Resilient Beeping Protocols from Noiseless Protocols}
\label{sec:simulaiton}
\subsection[Simulating B\_{cd}L\_{cd} protocol over BL\_e networks]{Simulating $B_{cd}L_{cd}$ protocol over $BL_\eps$ networks}

Given the \textsf{CollisionDetection} procedure of Algorithm~\ref{alg:collision}, one can simulate any beeping 
protocol that assumes that nodes can distinguish silence from a single beep and from
multiple beeps (i.e., the $B_{cd}L_{cd}$ model) 
over a \emph{noisy} beeping network that does not provide any inherent collision detection,~$BL_\eps$. 
Similarly, we can simulate any of the weaker beeping models (i.e., $BL$, $B_{cd}L$, $BL_{cd}$)
over the noisy beeping model~$BL_\eps$. 
Our simulation has a polynomially small error probability in~$nR$
and has a multiplicative overhead of~$O(\log n+\log R)$ in the number of rounds,
where $R$ is the number of rounds in the protocol to  be simulated.

\begin{theorem}
\label{thm:BeepSim}
Any protocol $\pi$ that takes $R=|\pi|$ rounds in the $B_{cd}L_{cd}$ model
can be simulated in the $BL_\eps$ in  $R \cdot O(\log n+\log R)$ rounds
and a success probability of at least $1-2^{-\Omega(\log n+\log R)}$.
\end{theorem}
Note that the above implies that a simulation with similar parameters is possible for $\pi$ defined in either the $B_{cd}L$, $BL_{cd}$, or the $BL$ models, by simply ignoring the information the parties learn about collisions that have happened.
\begin{proof}
This claim follows by simulating each round of the original protocol $\pi$ by performing the \textsf{CollisionDetection} procedure (Algorithm~\ref{alg:collision}).
For a given simulated round, any party that wants to beep in~$\pi$ is set $active$ in \textsf{CollisionDetection}, or otherwise it is passive. 
The output of each round is whether $0$, $1$, or $>1$ parties were active in this round, which corresponds to the round outcome of the node in the $B_{cd}L_{cd}$ model.  

Every round of~$\pi$ is simulated successfully with probability $1-2^{-\Omega(n_c)}$ and the entire $R$-round computation succeeds  with probability bounded by~$1-R\cdot2^{-\Omega(n_c)}$.
By setting $n_c=\Theta(\log R + \log n)$ we guarantee that the simulation succeeds up to a polynomially small error probability in~$n$ and~$R$. 
\end{proof}
Note that the above method requires the parties to know in advance the length of the protocol~$R$ (or a reasonable bound on it).

The above implies that our simulation is essentially tight. When $|\pi|= n^{O(1)}$, our overhead is $\Theta(\log n)$, which is tight for certain tasks, such as coloring (see Section~\ref{sec:mis} below), or the collision-detection task which is trivial in the $B_{cd}L_{cd}$ model, and takes~$\Theta(\log n)$ in $BL_\eps$ (Corollary~\ref{cor:collsion}). %
When $|\pi| >  n^{\omega(1)}$ we obtain an overhead of $O(\log n+ \log |\pi|)$. The question of whether such an overhead is tight remains open.

\subsection{Applications and Implications}\label{sec:applicaitons}
The simulation procedure in Section~\ref{sec:simulaiton} 
allows us to obtain noise-resilient protocols in the beeping model for various tasks
such as coloring, MIS, and leader election. These are obtained by simulating 
a (standard) beeping-model protocol for the required task. 
One main advantage is that we can simulate a protocol that was designed for the
$B_{cd}L_{cd}$ model. 
These usually feature reduced complexity due to the stronger model 
assumption of (inherent) collision detection. 
In many situations, simulating the $B_{cd}L_{cd}$ protocol rather than a
$BL$ protocol yields a noise-resilient protocol with reduced complexity.

To illustrate, consider the task of coloring (see definition below).
It is known that $O(\Delta+\log n)$ rounds suffice to color a network whose maximum degree is~$\Delta$
assuming (beeping) collision detection ($B_{cd}L$), while $\Omega(\Delta \log n)$ rounds are required if no collision detection is assumed.
Our noise-resilient simulation can make use of the more efficient $O(\Delta)$-round  $B_{cd}L$-protocol as its noiseless underlying algorithm and shave a $\log n$ factor off the overall round complexity. 
This choice allows us to achieve tight bounds for the task of coloring over the noisy beeping model. 
Unfortunately, the same approach does not lead to tight bounds for other tasks.

In the next subsection we consider the tasks of coloring, MIS, and leader election, 
and discuss noise-resilient protocols for these tasks. We also provide information about best known lower bounds for these tasks in the noiseless model, since it carries over to the noisy setting.
In Table~\ref{tbl:summary} we summarize the protocols implied by our simulation process for the noisy-beeping model.

\begin{table}[ht]
\vspace{1ex}
\begin{center}
\small
\begin{tabular}{llll}
\textbf{Task} & \textbf{$BL_\eps$ Upper Bound} & \multicolumn{2}{l}{~~~~\textbf{$BL_\eps$ Lower Bound}}  \\
\hline\hline\\[-2ex]
Collision Detection ~~~ & $O(\log n)$ & $\Omega(\log n)$ &  \parbox[c]{2cm}{\small this paper, \\\cite{AABCHK13}, \\  \cite{CMRZ19} } \\ \\[-2ex]
\hline 
Coloring   {\rule{0pt}{2.4ex}}   %
	& $O(\Delta \log n+\log^2n)$  & $\Omega(\Delta \log n)$    &{\small \cite{CMT17}}  \\ %
MIS %
 & $O(\log^2 n)$ & $\Omega(  \log n)$ & {\small\cite{MRSZ11}}\\
Leader Election &  $O(D\log n + \log^2n)$ ~~~~ & $\Omega(D+\log n)$   & {\small \cite{GH13}}
\end{tabular}
\end{center}
\caption{\small Summary of the results for the noisy-beeping model~$BL_{\eps}$. Upper bounds hold with high probability. In the communication network, $D$ is the diameter; $\Delta$ is its maximum degree; $n$ is the number of nodes.}
\label{tbl:summary}
\end{table}

\subsubsection{Coloring}
\label{sec:coloring}
The task of coloring a graph consists of assigning each node $v\in V$ with a color $c(v) \in K$ such that no two neighboring nodes are assigned the same color, i.e., $\forall (u,v)\in E, c(v) \ne c(u)$. 
The set $K$ represents the possible colors and its size is larger than the maximum degree of the graph; it is usually assumed that $K=O(\Delta)$. 
Certain algorithms allow a larger number of colors, e.g., $K=O(\Delta + \log n)$, 
an assumption that potentially  makes the  round complexity smaller. 
Protocols are allowed to use randomness and are required to succeed with high probability.

The current state of the art in coloring protocols for the beeping model is as follows.
For the $B_{cd}L$ model, Casteigts et~al.~\cite{CMRZ19} provided a randomized algorithm 
with $\Theta(\Delta+\log n)$ rounds and $K=O(\Delta+\log n)$ colors. If the number of colors $K\ge \Delta$ is known to all parties, an algorithm with round complexity~$O(K\log n)$ is known.
In the $BL$ model, Cornejo and Kuhn~\cite{CK10} designed a protocol with $O(\Delta \log n)$ rounds assuming knowledge of $K=O(\Delta)$. Note the $\log n$ gap between the $BL$ model and $B_{cd}L$ model.

Simulating the protocol of~\cite{CMRZ19} using Theorem~\ref{thm:BeepSim} yields the following.
\begin{theorem}
There exists an efficient randomized coloring protocol in the noisy beeping model in $O(\Delta\log n + \log^2n)$ rounds that succeeds with high probability. 
The protocol employs $K=O(\Delta+\log n)$ colors and assumes that the parties only know a bound on the size of the network~$n$.
\end{theorem}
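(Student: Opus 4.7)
The plan is to invoke the simulation theorem (Theorem~\ref{thm:BeepSim}) on an existing noiseless coloring protocol designed for a stronger beeping model. Specifically, I would start from the $B_{cd}L$ randomized coloring protocol of Casteigts et~al.~\cite{CMRZ19}, which uses $K=O(\Delta+\log n)$ colors, runs in $R=\Theta(\Delta+\log n)$ rounds, and succeeds with high probability. Since $B_{cd}L$ is weaker than $B_{cd}L_{cd}$, Theorem~\ref{thm:BeepSim} directly applies: the protocol can be simulated over $BL_\eps$ while preserving its output distribution, subject only to the simulation succeeding.

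Next I would work out the overhead. By Theorem~\ref{thm:BeepSim}, each round of the underlying protocol is simulated via an instance of \textsc{CollisionDetection} using a code of length $n_c=\Theta(\log n+\log R)$. Since $R=O(\Delta+\log n)=O(n)$, we have $\log R=O(\log n)$, so the per-round overhead collapses to $O(\log n)$. The total round complexity is therefore
\[
R\cdot O(\log n + \log R)=O(\Delta+\log n)\cdot O(\log n)=O(\Delta\log n+\log^2 n),
\]
as claimed.

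For the success probability, I would combine two events via a union bound: (i) the underlying $B_{cd}L$ protocol of~\cite{CMRZ19} outputs a valid coloring, which holds with high probability by assumption; and (ii) every one of the $R$ simulated rounds is decoded correctly, which by Theorem~\ref{thm:BeepSim} fails with probability at most $2^{-\Omega(\log n+\log R)}=n^{-\Omega(1)}$. Both bounds are polynomially small in~$n$, so the composed protocol still succeeds with high probability, and the knowledge assumption inherited from~\cite{CMRZ19} is only a bound on~$n$ (which is also what our collision-detection code length requires).

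The only mildly delicate point is making sure the noiseless protocol being simulated does not itself require more than a bound on~$n$ to be executable, and that its own high-probability guarantee has a polynomial-decay rate compatible with the one produced by the simulation, so that the union bound still yields a high-probability statement. Since the CMRZ19 algorithm meets both conditions, I do not expect any substantive obstacle; the theorem essentially follows by plugging known components together and tracking constants in the exponents.
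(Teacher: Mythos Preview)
Your proposal is correct and matches the paper's approach exactly: the paper derives this theorem simply by applying Theorem~\ref{thm:BeepSim} to the $B_{cd}L$ coloring protocol of~\cite{CMRZ19}, and your write-up just spells out the overhead and union-bound calculations that the paper leaves implicit.
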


A lower bound for coloring in the $BL$ model was given by Chlebus et~al.~\cite{CMT17} who argued that any randomized coloring algorithm over $G=K_n$ requires~$\Omega(n\log n)$ rounds. 
This lower bound carries over to the noisy setting, and proves that our noise-resilient coloring simulation is optimal.

\subsubsection{MIS}
\label{sec:mis}
A \emph{Maximal Independent Set} (MIS) is a set of nodes $I \subseteq V$ such that (a) no two nodes in $I$ are neighbors, and (b) any node $v\in V$ is either in~$I$ or a neighbor of some node in~$I$.
Finding an MIS is one of the fundamental tasks in distributed computing as this set ``governs'' the entire network: every node is an immediate neighbor of some member of the MIS, while there is only little redundancy in the sense that no two ``governor'' nodes are connected to each other.

Algorithms for finding MIS in the beeping model started with the work of Afek et al.~\cite{AABHBB11} which achieved $O(\log^2 n)$ rounds in~$B_{cd}L$. The current state of the art is an algorithm with round complexity of $O(\log n)$ in the $B_{cd}L$ model by Jeavons et al.~\cite{JSX16} (constants later improved by~\cite{CMRZ19}).

Simulating the protocol of~\cite{CMRZ19} using Theorem~\ref{thm:BeepSim} yields the following.
\begin{theorem}
There exists an efficient randomized protocol solving the MIS problem in the noisy beeping model in $O(\log^2n)$ rounds and with high probability of success. 
\end{theorem}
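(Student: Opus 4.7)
The plan is to invoke the simulation result of Theorem~\ref{thm:BeepSim} on a known, efficient noiseless MIS algorithm. I would take the $O(\log n)$-round MIS protocol for the $B_{cd}L$ model due to Jeavons et al.~\cite{JSX16} (with constants subsequently improved in~\cite{CMRZ19}) as the underlying protocol~$\pi$ to be simulated. This is precisely the route advertised at the start of Section~\ref{sec:applicaitons}: we exploit the fact that our simulation handles any of the (stronger) beeping variants, and therefore we can pay only for collision detection rather than for the weaker $BL$ assumption that the noise model nominally provides.

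Substituting $R = |\pi| = O(\log n)$ into Theorem~\ref{thm:BeepSim}, each round of~$\pi$ is simulated with a multiplicative blow-up of $O(\log n + \log R) = O(\log n + \log\log n) = O(\log n)$, so the total round complexity in the $BL_\eps$ model is
\[
R \cdot O(\log n + \log R) \;=\; O(\log n) \cdot O(\log n) \;=\; O(\log^2 n).
\]
Theorem~\ref{thm:BeepSim} further guarantees that the simulation fails with probability at most $2^{-\Omega(\log n + \log R)} = n^{-\Omega(1)}$; conditioned on success, every node's observed transcript coincides exactly with the transcript it would have seen under noiseless execution of~$\pi$. A union bound with the high-probability success of the underlying MIS algorithm (also $1 - n^{-\Omega(1)}$) yields an overall success probability of $1 - n^{-\Omega(1)}$, as required.

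The main obstacle is really just bookkeeping rather than any substantive new idea: one must verify that the chosen underlying algorithm is indeed formulated in one of the beeping variants handled by Theorem~\ref{thm:BeepSim} (namely $B_{cd}L_{cd}$, $B_{cd}L$, $BL_{cd}$, or $BL$), so that distinguishing $\silence$, $\single$, and $\col$ per round via Algorithm~\ref{alg:collision} is sufficient to faithfully reproduce each node's round-by-round view. Since the protocols of~\cite{JSX16,CMRZ19} are explicitly designed for the $B_{cd}L$ model, this verification is immediate, and no additional analysis is needed beyond plugging the parameters into Theorem~\ref{thm:BeepSim}.
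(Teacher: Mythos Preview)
Your proposal is correct and follows exactly the same route as the paper: apply Theorem~\ref{thm:BeepSim} to the $O(\log n)$-round $B_{cd}L$ MIS protocol of~\cite{JSX16,CMRZ19}, obtaining $O(\log^2 n)$ rounds with high probability. The paper's own justification is essentially the one-line remark ``Simulating the protocol of~\cite{CMRZ19} using Theorem~\ref{thm:BeepSim} yields the following,'' so your more detailed bookkeeping (computing $\log R = O(\log\log n)$ and the union bound) only makes the argument more explicit.
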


As for a lower bound for the MIS task, M{\' e}tivier et al.~\cite{MRSZ11} observed that the work of Kothapalli et~al.\@ on distributed coloring~\cite{KSOS06} implies an $\Omega(\log n)$ lower bound on the complexity of finding MIS in the beeping model, through a reduction by Wattenhofer~\cite{Watt07}.

\subsubsection{Leader Election}
\label{sec:le}
Many distributed tasks begin at some designated node, \emph{a leader}, 
whose role is to initiate or coordinate the progress of the required distributed task. 
The task of electing a leader 
assumes that the nodes begin with some identifier (possibly chosen at random by the node itself), otherwise deterministic leader election is sometimes impossible~\cite{AttiyaW2004}. 
At the end of the protocol all nodes must output the same identifier of a node, elected to be the leader. 
This task basically amounts to breaking symmetry among all participants.

The state of the art is a leader election algorithm in the $BL$ beeping model by 
Dufoulon, Burman and Beauquier~\cite{DBB18}
which has a round complexity of $O(D + \log n)$ where $D$ is the diameter of the network. Simulating this protocol gives:
\begin{theorem}
There exists an efficient randomized leader election protocol in the noisy beeping model that takes $O(D\log n+\log^2n)$ rounds and succeeds with high probability. 
\end{theorem}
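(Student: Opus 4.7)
The plan is to instantiate the general simulation machinery of Theorem~\ref{thm:BeepSim} on the noiseless leader-election algorithm of Dufoulon, Burman, and Beauquier~\cite{DBB18}. Their protocol $\pi$ works in the (noiseless) $BL$ model and elects a leader with high probability in $R := |\pi| = O(D+\log n)$ rounds. Since $BL$ is one of the four beeping variants covered by Theorem~\ref{thm:BeepSim}, we obtain a simulating protocol $\Pi$ in the noisy model $BL_\eps$ whose round complexity is $R\cdot O(\log n + \log R)$ and whose success probability is at least $1-2^{-\Omega(\log n+\log R)}$.

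To convert this into the stated bound, I would first note that $D\le n$, so $\log R = \log(D+\log n) = O(\log n)$. Hence the multiplicative overhead is $O(\log n)$ and the total round complexity is
\[
R\cdot O(\log n) \;=\; O\bigl((D+\log n)\cdot \log n\bigr)\;=\;O(D\log n + \log^2 n),
\]
while the failure probability is at most $2^{-\Omega(\log n)}$, i.e., polynomially small in~$n$. The output guarantee carries over because the simulation reproduces, at every node, the exact transcript that $\pi$ would have generated under noiseless beeping; in particular, the identifier that each node would output as the elected leader in $\pi$ is output by the corresponding node in $\Pi$.

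There is essentially no obstacle beyond a modest bookkeeping step: one must verify that $\pi$ is well-defined for every node in $V$ and that the (high-probability) correctness guarantee of $\pi$ composes with the (high-probability) correctness of the simulation via a union bound, which is immediate since both error events are polynomially small in~$n$. The only mild subtlety is that Theorem~\ref{thm:BeepSim} assumes the round length $R$ is known in advance to set the codeword length $n_c=\Theta(\log n+\log R)$; here this is fine, because the algorithm of~\cite{DBB18} comes with an \emph{a priori} bound on $R$ in terms of $n$ and $D$, and either $D$ is known or a polynomial upper bound suffices (recall that $n_c$ appears only inside a logarithm, so any polynomial bound on $R$ yields the same asymptotic overhead). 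Thus, the overall round complexity is $O(D\log n + \log^2 n)$ with high probability, completing the proof.
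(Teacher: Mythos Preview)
Your proposal is correct and follows exactly the approach the paper takes: apply Theorem~\ref{thm:BeepSim} to the $O(D+\log n)$-round $BL$ leader-election protocol of~\cite{DBB18}, and observe that $\log R = O(\log n)$ since $D\le n$, yielding the stated $O(D\log n+\log^2 n)$ bound with polynomially small failure probability. Your write-up is in fact more detailed than the paper's, which simply states the theorem after the sentence ``Simulating this protocol gives''.
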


A lower bound 
of $\Omega(D+\log n)$ rounds for electing a leader 
follows from~\cite{NO02,GLS12}, see also~\cite{GH13}.

\goodbreak 
\section[Simulation of message passing protocols]{Simulation of message passing protocols over $BL_\eps$ networks}
\label{sec:mp}

In this section we show how to simulate a protocol designed for message-passing networks (in the CONGEST model) over noisy beeping networks. Our simulation succeeds with high probability and incurs a multiplicative communication overhead that tends to $O(\Delta\cdot \min(\Delta^2,n) )$ as the message-passing  protocol's length grows to infinity. 
This implies a \emph{constant communication overhead} for constant-degree networks.

Furthermore, we show that this simulation is essentially tight. That is, we show that over a clique, some tasks necessitate a multiplicative overhead of $\Omega(n^2)$, which matches the simulation's upper bound.

\paragraph{The message-passing CONGEST.}
The CONGEST($B$) model assumes a network abstracted as the undirected graph~$G=(V,E)$ with $n=|V|$ nodes, where each $(u,v)\in E$ is a bi-directional communication channel. Communication works in synchronized rounds where in every round, a message of $B$~bits is potentially communicated over every channel. Commonly, $B=O(\log n)$.

A \emph{protocol} in this model determines for each node~$v$, which message it needs to send to each of its neighbors~$u$ in every round. The message is denoted $M_{vu}$ 
and its length satisfies~$|M_{vu}|\le B$. Messages to different neighbors (in any given round) may be different. 
However, we assume that in every round exactly $2|E|$ messages are being communicated in the network; such protocols are sometimes called \emph{fully utilized}.
Recall that $|\pi|$ denotes the number of rounds the protocol~$\pi$ takes, that is, all parties terminate after exactly $|\pi|$~rounds, and this quantity is assumed to be known to the parties in advance.

Similar to the beeping model,  we also assume that nodes do not  have unique identifiers  in the CONGEST model. Instead, each node has a list of \emph{ports} to which it can send messages, where each port is connected to a single neighbor. In particular, port numbers may be arbitrary and no binding between port numbers and nodes identities may be assumed.

\subsection{The Simulation}
We now show how to simulate a CONGEST(B) protocol over the noisy beeping $BL_\eps$ model. 

The simulation idea is as follows (cf.~\cite{BBDK18,ABLP89}).
We first employ a 2-hop coloring over the network. Next, each node associates the identities of its neighbors (i.e., their port number) with their color. 
The 2-hop coloring allows us to avoid collisions through time-division, 
by allowing exactly a single ``color'' to beep at any given time. 
Every round of~$\pi$ can be simulated by $O(c\cdot \Delta \cdot B)$ rounds in~$BL_\eps$, 
where $c$ is the maximal number of colors.
To that end, the node $v$ concatenates the messages $\{M_{vu}\}_{u\in \neigh_v}$ that it needs to send in that round to a single string $M_v$ and broadcasts this message, encoded with some error correction code. After $O(B\Delta)$ rounds a single node $v$ will have completed communicating its~$M_v$, and thus after $O(cB\Delta)$ all the parties will have broadcast their messages without interfering with each other. 
Finally, we employ the multiparty interactive coding by 
Rajagopalan and Schulman~\cite{RS94} (see also~\cite{ABEGH19}) 
that will take care of noise corrupting received messages. This interactive coding is given by the following theorem.
\begin{theorem}[\cite{RS94,rajagopalan94}]\label{thm:RS}
For any integer $t \ge 0$, any fully-utilized deterministic $R$-round protocol~$\pi$ in the message-passing model over any network~$G$ communicating bits
can efficiently be transformed into a fully-utilized noise-resilient protocol~$\Pi$ in the message-passing model that simulates~$\pi$. 
The protocol~$\Pi$ communicates messages of constant size,
takes $2R+t$~rounds and succeeds with probability
$1-n(2(\Delta+1)p)^{\Omega(R+t)}$ given that any message is correctly received with probability~$1-p$. 
\end{theorem}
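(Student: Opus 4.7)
The plan is to build the noise-resilient simulation $\Pi$ by layering a \emph{tree code} on every edge of $G$, following Schulman's approach for interactive coding. A tree code of depth $R$, constant alphabet size, and constant distance $\alpha$ is an assignment of symbols to the edges of a complete rooted tree such that any two root-to-depth-$\ell$ paths that diverge at depth $j$ disagree on at least $\alpha(\ell-j)$ of their last $\ell-j$ symbols. Such codes are known to exist (non-constructively for the strongest parameters, or explicitly with slightly weaker parameters, which suffices here). We fix such a tree code $\mathcal{C}$ of constant rate and distance, to be used by every edge.

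The simulation proceeds as follows. In every round $t$, each party $v$ holds a current best estimate $\hat{T}_{v,u}^{(t)}$ of the transcript transmitted along each incident edge $(v,u)$ up to round $t$. From these estimates it determines, via the rules of $\pi$, the bit it would wish to send on each outgoing edge, appends that bit to its running transcript, and then transmits the symbol prescribed by $\mathcal{C}$ for the new path. Upon receiving one (possibly corrupted) symbol on each incoming edge, the party decodes: for each incident edge $(v,u)$, it searches the tree code for the root-to-depth-$t$ path whose symbols are closest in Hamming distance to the sequence received so far, and updates $\hat{T}_{v,u}^{(t+1)}$ to this path. This is the standard Schulman decoding rule lifted to the multiparty setting. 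Crucially, each edge is handled independently at the code level, but the overall protocol state at each node is a joint function of all its incident decodings.

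The correctness proof reduces to bounding the probability that any single edge ``loses track.'' By the tree-code distance property, if on some incident edge of $v$ the fraction of corrupted symbols among the first $t$ transmissions exceeds roughly $\alpha/2$, then the decoder at $v$ may choose a wrong path. For this to happen on a specific edge over at least $k$ rounds of persistent deviation, at least $\Omega(k)$ out of $k$ independent transmissions must be corrupted, an event whose probability is at most $\binom{k}{\alpha k/2}p^{\alpha k/2} \le (2p)^{\Omega(k)}$. Union-bounding this event over the $\le \Delta+1$ incident edges of each node, over all $n$ nodes, and over all $R$ possible starting rounds of a deviation, yields the claimed failure probability $1 - n(2(\Delta+1)p)^{\Omega(R)}$. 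The round overhead is constant because each original bit is replaced by one transmission of a constant-size tree-code symbol, and the alphabet can be encoded using $O(1)$ bits.

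The main obstacle is the joint analysis across edges: because $v$'s next-round bit on edge $(v,u)$ depends on its decoded estimates on \emph{all} its incident edges, an error on one edge can propagate to spurious transmissions on others. Handling this requires the careful invariant that, as long as no single edge incurs more than $\alpha/2$ fraction of corruptions within any window, each node's full state agrees with the noiseless transcript; then one bad edge-level deviation triggers at most a local, recoverable perturbation. Making this rigorous is exactly what Schulman's original argument, and its multiparty extension in~\cite{RS94,ABEGH16}, accomplishes, and it is the step I would have to be most careful with rather than treat as routine.
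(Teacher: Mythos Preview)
The paper does not prove this theorem at all: it is quoted verbatim as a known result from~\cite{RS94} (with~\cite{ABEGH16} mentioned for an efficient variant) and used as a black box inside the proof of Theorem~\ref{thm:MPsim}. So there is no ``paper's own proof'' to compare against; any proof you supply is additional to what the paper contains.

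That said, your sketch is a faithful outline of the Rajagopalan--Schulman argument: tree codes of constant alphabet and distance are placed on every edge, each node decodes the closest tree-code path per edge and recomputes its next $\pi$-bit from those estimates, and the failure probability is controlled by union-bounding over edges, nodes, and time windows. Two points deserve care if you actually write this out. First, the event you bound is not simply ``more than an $\alpha/2$ fraction of the first $t$ symbols are corrupted''; the correct invariant (and the one you allude to in your last paragraph) concerns \emph{suffix} windows, and the combinatorics behind the $(2(\Delta+1)p)^{\Omega(R)}$ term comes from counting patterns of bad intervals across the $\le \Delta+1$ incident edges of a node, not just a single per-edge Chernoff bound. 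Second, the ``propagation'' issue you flag is exactly where the proof is nontrivial: an error on one incident edge can cause $v$ to \emph{send} wrong tree-code symbols on all its other edges, and one must argue that the tree-code distance absorbs these induced mismatches as well. You correctly defer to~\cite{RS94,ABEGH16} for this step; just be aware that your displayed union bound as written does not yet account for it.
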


The simulation of CONGEST(B) over $BL_\eps$ is formally described in Algorithm~\ref{alg:simcongest}. 
This scheme already assumes that the nodes possess a 2-hop coloring with $c$~colors. 
Also note that the parties in Algorithm~\ref{alg:simcongest} are assumed to know the maximum degree~$\Delta$ of~$G$. 
This information (or a reasonable bound) can be derived from~$c$.

We note that obtaining a 2-hop coloring in $BL_\eps$ can be done in $O(\Delta^2\log n + \log^2n)$ rounds with $c=O(\Delta^2+\log n)$ colors, or in 
$O(\Delta^2\log^2 n)$ rounds with $c=\Delta^2+1$ colors, assuming $\Delta$ is known, 
both via a $B_{cd}L_{cd}$ scheme by~\cite{CMRZ19} and Theorem~\ref{thm:BeepSim}.

\newcommand{\LongLine}{\parindent=-1em\leftskip=1em}

\begin{algorithm}[htp]
\caption{Simulation of CONGEST(B) over $BL_\eps$ for node~$u$}
\label{alg:simcongest}
\begin{algorithmic}[1]
\small

\State \textbf{Input:} 
a 2-hop coloring with maximal color $c$; A protocol $\pi$ in CONGEST(B); 
$\Delta$ the  maximum degree of the communication network~$G$.
\Statex

\State Let $C:\{0,1\}^{k_C} \to \{0,1\}^{n_C}$ be a code with $k_C=\Theta(\Delta)$,  $n_C=\Theta(\Delta)$, and a constant relative distance.

\State Let $\pi_1$ be a simulation of $\pi$ in CONGEST(1). Namely, split each transmission of $\pi$ into (exactly) $B$ transmissions of~$\pi_1$, where nodes in $\pi_1$ communicate \emph{bits}.
\State Let $\Pi$ be the encoded protocol of~$\pi_1$ given by Theorem~\ref{thm:RS} with message alphabet $\Sigma$ where $|\Sigma|=O(1)$, and $t = \max(0, \Theta(\tfrac1\Delta\log n) -B|\pi|)$. 
\State Set $k_C \ge \Delta \log |\Sigma|$.

\Statex

\Statex \textbf{Preprocessing Step: }

\State \hspace{\algorithmicindent} 
\parbox[t]{0.9\columnwidth}{\LongLine Node $u$ ``collects'' its \emph{colorset}---the colors of all its neighbors $\neigh_u$. Without noise this can be done in $c$ rounds, where each node beeps in the slot assigned with its color. 
Via Theorem~\ref{thm:BeepSim}, this can be done in a noise-resilient manner in $O(c \log n)$ rounds.}
\State \hspace{\algorithmicindent} 
\parbox[t]{0.9\columnwidth}{\LongLine{}Every node $u$ learns the colorset of every~$v\in \neigh_u$. Without noise this can be done in $c^2$ rounds: we give each color $c$ slots; each node beeps 1 for every color in $[c]$ that appears in its colorset. Via Theorem~\ref{thm:BeepSim}, this can be done in a noise-resilient manner in $O(c^2\log n)$ rounds.\strut}
\State \hspace{\algorithmicindent} 
\parbox[t]{0.9\columnwidth}{\LongLine Let $C_u$ be the colorset of $u$. Fix an (arbitrary) mapping between the color of $u$'s neighbors and their port number.}
	
\Statex \Statex
	
\Repeat
	\For {$i \in [c]$ in a round-robin manner}
		\If {self color is $i$}
			\State \parbox[t]{0.6\columnwidth}{\LongLine{}Let $\overline M= M_{u,j_1}\circ \cdots \circ M_{u,j_t}$ for $\{j_1,...,j_t\} = C_u$, be the concatenation  of the $t\le \Delta$ messages $M_{uj}$ that $u$ needs to send to its neighbors in~$\Pi$ %
			(pad with zeros so that $|\overline M|=\Delta|\Sigma|$).
			Order the $M_{uj}$ in $\overline M$ by an increasing color number, $j_1< j_2 < \dotsb < j_t$.\strut} \Comment{at most $\Delta$ messages\strut}
			\State  Beep $C(\overline M)$    \Comment{$n_C$ rounds}
		\Else 
			\State Listen for $n_C$ rounds, decode the received codeword via $C^{-1}$.
			\State \parbox[t]{0.6\columnwidth}{\LongLine From the decoded message $\overline M$ obtain $M_{ij}$ for $j$ the color of~$u$  \\ (if such a message exists).\strut}
			\Comment{\smash{\parbox[t]{0.2\columnwidth}{ $u$ knows the colorset of the sender, thus knows to split $\overline M$ into $M_{ij}$s.}}}
			\State Use $M_{ij}$ as the received message from color (port)~$i$ in~$\Pi$. 
		\EndIf
	\EndFor

\Until {$\Pi$ terminates}

\end{algorithmic}
\end{algorithm}

\subsection{Analysis}
In this section we prove that our simulation succeeds with high probability, and has the desired overhead.

\begin{theorem}\label{thm:MPsim}
Given a 2-hop coloring with $c$ colors and for any $\eps<1/2$, 
any fully-utilized protocol~$\pi$ in the CONGEST(B) model can be simulated 
with high probability in the noisy beeping model $BL_\eps$ 
in $O(c^2\log n) + \max(|\pi|,\log n/\Delta)\cdot O(B\cdot c \cdot \Delta)$~rounds. 
\end{theorem}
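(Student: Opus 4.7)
My plan is to verify correctness and round complexity for the layered simulation of Algorithm~\ref{alg:simcongest}, which performs three reductions in sequence: first splitting each $B$-bit transmission of~$\pi$ into $B$ single-bit transmissions to obtain $\pi_1$ in the CONGEST(1) model; then applying the Rajagopalan--Schulman transformation of Theorem~\ref{thm:RS} to produce a noise-resilient protocol~$\Pi$ over a constant-size alphabet; and finally executing $\Pi$ on top of~$BL_\eps$ using a color-based TDMA schedule together with an outer block code~$C$ of length $n_C = \Theta(\Delta)$ to protect each sender's transmission in its assigned slot.

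For the round complexity, splitting $\pi$ into $\pi_1$ costs a factor of~$B$, and Theorem~\ref{thm:RS} adds only a constant factor, so $|\Pi| = O(B|\pi|)$. Each round of~$\Pi$ is simulated by cycling through all $c$ color slots, each consuming $n_C = \Theta(\Delta)$ beeping rounds, for a total of $O(B|\pi|\cdot c\cdot \Delta)$ rounds, hence the claimed multiplicative overhead of $O(B\cdot c\cdot\Delta)$. The one-shot preprocessing (collecting each node's colorset and its neighbors' colorsets, at cost $O(c^2\log n)$ via Theorem~\ref{thm:BeepSim}) is absorbed into the $\limsup$ as $|\pi|\to\infty$.

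For correctness, the 2-hop property of the coloring ensures that in any given color slot~$i$, each listener~$v$ has at most one neighbor~$u$ of color~$i$ beeping, so~$v$ receives the codeword $C(\bar M_u)$ passed bit-by-bit through an $\eps$-noisy channel. Since $C$ has constant relative distance~$\delta$ and $n_C$ is a sufficiently large constant multiple of~$\Delta$, a Chernoff bound shows that the number of flipped bits exceeds the decoding radius only with probability $\exp(-\Omega(\Delta))$; on the complementary event $v$ decodes $\bar M_u$ correctly, and using the colorset of~$u$ already learned in preprocessing, extracts the sub-message $M_{u,j}$ addressed to its own color~$j$, exactly as required to simulate the corresponding receive step of~$\Pi$.

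Finally, I would feed this per-slot failure probability $p = \exp(-\Omega(\Delta))$ into Theorem~\ref{thm:RS}. Choosing the constant in $n_C = \Theta(\Delta)$ sufficiently large (and padding $n_C$ up to a fixed threshold for tiny~$\Delta$) guarantees $2(\Delta+1)p \le 1/2$, so the global failure probability becomes $n\cdot 2^{-\Omega(B|\pi|)}$, which is negligible. The only real subtlety, and the source of the improved $O(Bc\Delta)$ overhead rather than $O(Bc\Delta\log\Delta)$, is that the broadcast nature of a beep lets each node encode the concatenation of all its $\Theta(\Delta)$ outgoing messages into a single codeword of length $\Theta(\Delta)$, so that one constant-rate block code protects all of them simultaneously; otherwise one would pay an extra $\Theta(\log\Delta)$ factor per edge to repeat each message and drive its per-edge noise down to $O(1/\Delta)$ as demanded by Theorem~\ref{thm:RS}.
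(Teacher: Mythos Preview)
Your proposal is correct and follows essentially the same approach as the paper's proof: the same three-layer reduction ($\pi\to\pi_1\to\Pi$ via Theorem~\ref{thm:RS}, then color-based TDMA with a length-$\Theta(\Delta)$ block code), the same round-count bookkeeping, the same use of the 2-hop property to guarantee a unique sender per slot, and the same per-message error bound $2^{-\Omega(\Delta)}$ fed back into Theorem~\ref{thm:RS}. Your added remarks about padding $n_C$ for small~$\Delta$ and about why the concatenated broadcast encoding saves the extra $\log\Delta$ factor are welcome clarifications that the paper's proof leaves implicit.
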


\begin{proof}
We argue that Algorithm~\ref{alg:simcongest} satisfies the theorem's statement. 

Let us begin by analyzing the round complexity of Algorithm~\ref{alg:simcongest} on an input~$\pi$.
The preprocessing step takes $O(c^2\log n)$ rounds. Note that this part occurs only once and does not affect the multiplicative blowup of the simulation.
The bit-version protocol, $\pi_1$, takes $|\pi_1|=B|\pi|$ rounds and communicates bits.
Via Theorem~\ref{thm:RS}, we know that $\Pi$ takes $O(|\pi_1|)=O(B|\pi|)$ rounds and uses symbols of constant size.  
The repeat loop of Algorithm~\ref{alg:simcongest} occurs $|\Pi|$ times, where each iteration consists of $c$ transmissions of codewords from~$C$. Transmitting a single codeword from~$C$ takes $n_c=O(\Delta)$ rounds of the simulation. Putting it all together, the statement holds. 

We now move on to the correctness analysis. 
First, note that in this analysis, we identify colors with port numbers, 
i.e., a message sent to port $i$ in $\pi$ is sent to the (neighbor) node with color $i$ in the simulation. 
Note that the coloring need not be contiguous or surjective, but we assume all nodes know their and their neighbors' colorsets (recall that a \emph{colorset} of a node $u$ is set of the colors of its neighbors $\neigh_u$). The preprocessing step satisfies this assumption. 
The mapping between ports and colors may be arbitrary, but this has no effect on the correctness of~$\pi$, since  port numbers are arbitrary to begin with.

\begin{lemma}\label{lem:singleRound}
Every iteration of the simulation's repeat-loop simulates a single round of $\Pi$ with message-error probability $2^{-\Omega(\Delta)}$.
\end{lemma}
\begin{proof}
The coding of Theorem~\ref{thm:RS} produces a fully-utilized protocol in which in each round, every party sends one message to each of its neighbors. Hence, a simulation of a single round of $\Pi$ requires delivering a single message from $u$ to $v$ for any pair of neighbors $(u,v)\in E$. 
Note that two messages are delivered on each edge: one in each direction.

The simulation of a single round takes $c$ epochs, where in each epoch a node with a matching color
sends a single message~$\overline M$ that encodes its messages to all its neighbors.
Furthermore, the 2-hop coloring guarantees that no two nodes in the neighborhood of any node~$u$ have the same color.
Hence, every node $u$ hears only a single sender at every given epoch, and no collisions occur.
Thus, the node~$u$ gets the message~$\overline M$ from $v\in \neigh_u$ with color~$i$ during the $i$-th epoch of the cycle, 
while no other nodes in $u$'s neighborhood beep during that same epoch. 
Node~$u$ can parse~$\overline M$ and obtain the specific message~$M_{vu}$ that $v$ addressed to~$u$ since $u$ knows its own color, and it knows the colorset of~$v$. With this information $u$ can infer the position of $M_{vu}$ in~$\overline M$.

Nonetheless, noise may alter the communicated symbols. Encoding each message~$\overline M$ with the error-correcting code~$C$ that has a constant relative distance guarantees that each such message is decoded correctly, with a probability of at least~$1-2^{-\Theta(n_C)} = 1-2^{-\Theta(\Delta)}$. 
\end{proof}

By a repeated activation of Lemma~\ref{lem:singleRound} we get that all the $|\Pi|$ rounds of $\Pi$ are simulated in $BL_\eps$ without any collisions, and that the message-error probability is at most $2^{-\Theta(\Delta)}$. Theorem~\ref{thm:RS} then states that $\pi_1$ (and hence, $\pi$) is correctly simulated,  with error probability at most
\[
n\left (2(\Delta+1) 2^{-\Theta(\Delta)}\right)^{\Omega(|\pi_1|+t)} = 2^{-\Omega(B\Delta \cdot |\pi|+\Delta t)+\log n}.
\]
To complete the proof we need that, for some constant $\alpha$ hidden in the $\Omega()$ notation,  $\alpha(B\Delta |\pi|+\Delta t) > \log n$. This means that
$t > \frac{\log n}{\alpha\Delta} - B|\pi|$ suffices to guarantee success with high probability.
Note that when $|\pi|\to\infty$ in particular $|\pi|>O(\log n/\Delta)$, we have $t=0$ and the overall simulation takes $O(c^2\log n) + |\pi| \cdot O(B\cdot c \cdot \Delta)$, i.e., the multiplicative overhead is~$O(Bc\Delta)$.
Otherwise (when $|\pi| < O( \log n/\Delta)$), the factor $t$ increases the length of~$\Pi$  to be of magnitude $\Theta(\lceil \log n / \Delta\rceil)$ which guarantees the success with high probability when $n\to\infty$. The simulation in this case will take $O(c^2\log n) + \Theta(\lceil \log n / \Delta\rceil)\cdot O(B c \Delta)$ rounds.
\end{proof}  %

\begin{remark}
Despite the fact that the coding of Theorem~\ref{thm:RS} is computationally inefficient,
note that the simulation of Algorithm~\ref{alg:simcongest} can be made efficient. 
To this end, one needs to employ an efficient (randomized) version of the Rajagopalan-Schulman coding
such as the one described in~\cite{GMS14} or as described in~\cite{ABEGH19}.
\end{remark}

\subsection[Optimality for K\_n]{Optimality for $K_n$}

In this section we argue that our simulation of message-passing protocols 
is optimal for certain graphs. 
In particular, we show that if each party in $K_n$ starts with $n-1$  random (bit-)messages, where the $i$-th message is designated to 
its $i$-th neighbor, then $\Theta(n^2)$ overhead is necessary and sufficient.
That is,  a CONGEST(1) protocol for the above task will take a single round while the $BL$ protocol necessitates $\Omega(n^2)$ rounds, yielding the desired simulation overhead. 

The high-level intuition is that over the network~$K_n$, each party hears the beeps of all other parties. We can think of this graph as being a single beeping channel, which can only communicate (broadcast) a single bit of information in every round. 
To see this, consider a beeping protocol over~$K_n$ in the $BL$ model. Consider the input to the channel of the $i$-th party, $X_i\in \{0,1\}$, and the superimposed signal transmitted by the channel to all parties $Y = X_1 \vee \cdots \vee X_n$. 
There are two options: if  $X_1\cdots X_n=0^n$ then $Y=0$, and if  $X_1\cdots X_n \ne 0^n$ then $Y=1$. 
This implies that the channel can transfer only a single bit at a time---whether or not \emph{all} the parties had the input~$0$.

Let us formally define the $k$-message-exchange task, which is a sequential $k$-fold generalization of the exchange task discussed above.
\begin{definition}
The $k$-message-exchange task is defined as follows. 
\textbf{Input:}  party $i$ is given $k$ sets of $n-1$ messages, $M^1_i$ to $M^k_i$, where
 $M^1_i = M^1_{i,1},\ldots,M^1_{i,n}$, etc.\footnote{Since each node has $n-1$ neighbors to send messages to, the value of $M^t_{i,i}$ is undefined by the input and need not be communicated. We can assume $M^t_{i,i}=0$ for any $t\in[k],i\in[n]$, which is added just to ease the notations.} Assume that each  $M^t_{i,j}\in\{0,1\}$ is uniformly distributed, independently of all other messages.
\textbf{Output:} 
party $i$ needs to output the $k$ vectors $\tilde M^1$ to $\tilde M^k$, where
$\tilde M^1_i=M^1_{1,i},M^1_{2,i},\ldots,M^1_{n,i}$, etc.
\end{definition}
Note that the $k$-message-exchange task can be solved over~$K_n$ by a trivial protocol~$\tilde\pi$ in CONGEST(1),
where party~$i$ simply sends $M^t_{i,j}$ to party~$j$ in round~$t$. Hence, for any $n\in\mathbb{N}$,  $|\tilde\pi(K_n)|=k$ rounds. 

The reason we define the $k$-message-exchange version of this task rather than considering a single round is because we want our result to apply for the case where  $n\to\infty$ as well as for $|\pi|\to\infty$ (that is, $k\to\infty$). This notion is somewhat stronger than the notion of $n\to\infty$ commonly used in distributed computation. This stronger notion applies for example to a fixed network (fixed~$n$) over which we want to perform some protocol. If the protocol is short, standard error correction (or repetition) of each message will guarantee the correct simulation of the protocol. However, when the protocol is long (e.g., when $k\to\infty$ rounds are to be simulated), the per-message coding will eventually fail, and stronger coding techniques must be employed (e.g., the techniques of Theorem~\ref{thm:RS}).

\begin{theorem}
Solving the $k$-message-exchange task over~$K_n$ in the $BL$ model requires~$\Theta(k n^2)$ rounds, when either $k\to\infty$ or when $n\to\infty$.
\end{theorem}
\begin{proof}
We begin with the lower bound. Note that since all parties are connected to the same beeping channel (since we are on a clique), all parties hear the same transmitted transcript. That is, all information is broadcast to all parties. We show the lower bound by reducing the problem of \emph{multisource broadcast}~\cite{BII93,LBA93TR} to the problem of $k$-message-exchange; then, a lower bound by~\cite{CD19bc} to the multisource broadcast problem will imply our desired bound.

Let us first define the multisource broadcast problem and then define the reduction. In the $k'$-multisource broadcast problem we are given a network $G'$ with $n'$ parties of diameter~$D'$. The nodes are given unique ID's from the range~$[L']$.  There are $k'$ nodes which begin with a message from the range~$[M']$. 
The goal is to make all the parties learn all the $k'$ messages. Previous work~\cite{CD19bc} distinguished two flavors of this task: in the multisource broadcast \emph{with provenance} each node needs to learn the $k'$ messages along with the ID of the source node of each such message, i.e., learn the $k'$ pairs $(ID_i, message_i)$ for $i\in[k']$. In the multisource broadcast \emph{without provenance} one only needs to learn the messages themselves, without having to know their source ID or their multiplicity, i.e., learn the set $\{ message_i \mid i\in[k']\}$. We focus on the `with provenance' version.

We use the following lower bound by Czumaj and Davies~\cite{CD19bc} on the number of rounds any $k'$-multisource broadcast over network $G'$ in the $BL$ model must take to complete with high probability.
\begin{lemma}[{\cite[Theorem~18]{CD19bc}}]\label{lem:MB_failureProb}
In all ranges of parameters, any algorithm achieving multi-broadcast with provenance
within $\frac{k'}{8} \log\frac{L'M'}{k'}$ rounds
fails with probability at least $1-\big(\frac{L'M'}{k'}\big)^{-k'/134}$.
\end{lemma}
From the above it is immediatly clear that $\Omega(k'\log\frac{L'M'}{k'})$ rounds are required to succeed with probability $1-o(1)$, both when $n\to\infty$ (and thus $L'\to\infty$) or when $M'\to\infty$, assuming the other parameters are fixed.

We now argue that the $k'$-multisource broadcast with provenance over the clique $G'=K_{n'}$, with a specific range of parameters $n',k',L',M'$ that we describe below, can be reduced to the $k$-message exchange problem over the clique $G=K_n$ (with parameters $n,k$ to be defined).
Fix $n,k$ arbitrarily. 
Set $n'=n$ and consider the clique $G'=K_{n'}$, where each one of the parties owns a unique ID out of the range~$[L']=[n]$,
and where every node is a source ($k'=n'$) that holds a message $m_i\in[M']$ to be broadcast, with $\log M'=k\cdot (n-1)$. 
For any $i\in[n]$ construct the messages $M^1_i,\ldots, M^k_i$ so that the binary representation of $M_i$ equals the binary representations of~$m_i$. Perform $k$-message-exchange with the above $M_i$s. Note that over $K_n$ in the $BL$ model, all the parties see the same transcript---the same history of rounds with beeps and rounds with silence.
Therefore, if some party $i$ learns some message $\tilde M_i$, then all the other parties learn the same information as well. This means that all parties learn all~$M_i$s. Furthermore, note that learning $\tilde M_i$ means learning the provenance of each bit, e.g., that $\tilde M_{i,j}$ originated at source~$j$. Thus, each party~$i$ can restore the provenance of each bit it learns in~$\tilde M_i$ but also for any other $\tilde M_{j}$ with~$j\ne i$.

Now, assume that the above $k$-message exchange takes $o(k n^2)$ rounds, then the $k'$-multisource broadcast with provenance over the set of parameters $k'=n,L'=n,\log M'=k(n-1)$ completes in $o(k  n^2)$ rounds, yet $k' \log (L'M'/k') = n \cdot(k(n-1))=\Theta(k n^2)$ which contradicts Lemma~\ref{lem:MB_failureProb} for this set of parameters, when either $n\to\infty$ or $k\to\infty$.

\medskip
We complete the proof of the theorem by showing a matching upper bound of $O(kn^2)$ for the same task over~$K_n$.
Consider Theorem~\ref{thm:MPsim} and recall that a 2-hop coloring can be done with $c=n$ colors, i.e., every node gets a different color. This can be obtained using the method by Chlebus et~al.~\cite{CMT17} with high probability in $O(n\log n)$ rounds in the noiseless $BL$ model and in $O(n\log^2n)$ rounds in~$BL_{\eps}$, via Theorem~\ref{thm:main}. Furthermore, note that since we are over a clique, all the parties learn the coloring and the pre-processing steps of collecting the colorset are no longer needed. 
Thus, in this special case, pre-processing takes only $O(n\log^2 n)$ rounds  instead of $O(c^2\log n)$. 

Given the above (2-hop) $n$-coloring and the adjusted pre-processing step, we get that any CONGEST(B) protocol~$\pi$ can be simulated with high probability 
in $O(n\log^2 n)+ |\pi|\cdot O(B\cdot c \cdot \Delta)$ rounds of~$BL_\eps$.
Since the $k$-message exchange task over $K_n$ can be solved in $k$ rounds of CONGEST(1), it can be simulated with high probability over $BL_\eps$ 
in $O(n\log^2 n)+k\cdot(1\cdot n \cdot (n-1))=\Theta(kn^2)$ rounds, which gives the desired bound.
\end{proof}
Note that the above lower bound is stated in the $BL$ model and the upper bound in the $BL_\eps$ model. Thus, the statement holds for both the $BL$ and the $BL_\eps$ models.

\bibliographystyle{alphabbrv-doi.bst}	%
\bibliography{network}

\end{document}